\documentclass{article}
\usepackage{a4wide}
\usepackage{graphicx}
\usepackage{epsfig}
\usepackage{amsmath}
\usepackage{amssymb}

\newcommand{\dist}          {d}

\newcommand{\shval}          {\mathrm{shval}}

\newcommand{\frag}          {\mathrm{frag}}
\newcommand{\girth}          {\mathrm{girth}}
\newcommand{\tail}          {\mathrm{tail}}
\newcommand{\head}          {\mathrm{head}}

\newcommand{\apex}          {\mathrm{apex}}
\newcommand{\best}          {\mathrm{best}}

\DeclareMathOperator*{\polylog}{polylog}
\newtheorem{theorem}{Theorem}

\newtheorem{lemma}[theorem]{Lemma}
\newtheorem{corollary}[theorem]{Corollary}
\newtheorem{remark}[theorem]{Remark}
\newcommand{\qed}{\hfill \ensuremath{\Box}}

\newenvironment{proof}{\vspace{1ex}\noindent{\bf Proof.}\hspace{0.5em}}
	{\hfill\qed\vspace{2ex}}

\newcommand{\nota}[1]{\textbf{(*)}\marginpar {\tiny \raggedright{(*) #1}}}
\newcommand{\commento}[1] {}
\newcommand{\floor}[1]       {\left\lfloor #1 \right\rfloor}

\newcommand{\round}[1]       {\left( #1 \right)}
\newcommand{\braces}[1]       {\left\{ #1 \right\}}

\newcommand{\ResilientSpanner}{\texttt{ResilientSpanner}}
\newcommand{\ParsimoniousCycles}{\texttt{ParsimoniousCycles}}


\newenvironment{mylist}[1]{
\setbox1=\hbox{#1}
\begin{list}{}{
\setlength{\labelwidth}{\wd1}
\setlength{\leftmargin}{\wd1}
\addtolength{\leftmargin}{0em}
\addtolength{\leftmargin}{\labelsep}
\setlength{\rightmargin}{1em}}}{\end{list}}

\newcommand{\litem}[1]{\item[#1\hfill]}


\newcounter{progcount}
\newcounter{linecount}[progcount]

\newcommand{\N}{\refstepcounter{linecount}\thelinecount. \>}
\newcommand{\NL}[1]{\refstepcounter{linecount}\thelinecount. \label{#1}\>}
\newcommand{\rem}[1]{\mbox{/* \textit{#1} */}}
\newenvironment{prog}[1]{
    \refstepcounter{progcount}\label{#1}
    \par\vspace{0.5ex}\noindent\hspace{1ex}
    \begin{minipage}{\linewidth}
    \small
    \begin{tabbing}
    =spa\=spa\=spa\=spa\=spa\=spa\=spa\=spa\=spa\=spa\=spa\=spa\=\kill
}%
{
    \end{tabbing}
    \end{minipage}\\[0.5ex]
}

\newcommand{\key}[1]{\textbf{#1~}}\ignorespaces

\begin{document}
\pagestyle{plain}

\title{\bf On Resilient Graph Spanners\thanks{Work partially supported by the Italian Ministry of Education,
University, and Research (MIUR) under PRIN 2012C4E3KT national
research project ``AMANDA -- Algorithmics for MAssive and Networked
DAta''. A preliminary version of this paper was presented at the 21st Annual European Symposium on Algorithms~\cite{AFIR13}.}}

\author{Giorgio Ausiello\footnote{Dipartimento di Ingegneria Informatica, Automatica e Gestionale, Universit\`a
di Roma ``La Sapienza'', via Ariosto 25, 00185 Roma, Italy. Email: {\tt ausiello@dis.uniroma1.it}.}
\and 
Paolo G. Franciosa\footnote{Dipartimento di Scienze Statistiche, Universit\`a di Roma ``La Sapienza'',
piazzale Aldo Moro 5, 00185 Roma, Italy. Email: {\tt paolo.franciosa@uniroma1.it}.} 
\and 
Giuseppe F. Italiano\footnote{Dipartimento di Ingegneria Civile e Ingegneria Informatica,
Universit\`a di Roma ``Tor Vergata'', via del Politecnico 1, 00133
Roma, Italy. Email: {\tt giuseppe.italiano@uniroma2.it}.} 
\and 
Andrea Ribichini\footnote{Dipartimento di Ingegneria Informatica, Automatica e Gestionale, Universit\`a
di Roma ``La Sapienza'', via Ariosto 25, 00185 Roma, Italy. E-mail: {\tt ribichini@dis.uniroma1.it}.}
}

\date{}

\maketitle

\begin{abstract}
We introduce and investigate a new notion of resilience in graph spanners. 
Let $S$ be a spanner of a weighted graph $G$.
Roughly speaking, 
we say that $S$ is resilient if all its point-to-point distances are resilient to edge failures. Namely, whenever any edge in $G$ fails, then as a consequence of this failure all  distances do not degrade in $S$ substantially more than in $G$ (i.e., the relative distance
increases in $S$ are very close to those in the underlying graph $G$).
In this paper 
we show that sparse resilient spanners exist, and that they can be computed efficiently.
\end{abstract}

\section{Introduction}

Spanners are fundamental graph structures that have been extensively studied in the last decades since they were introduced in~\cite{PS89}.
Given a graph $G$, a \emph{spanner} is a (sparse) subgraph of $G$ that preserves the approximate distance between each pair of vertices.
More precisely, a $t$-spanner of a graph $G=(V,E)$ is a subgraph $S=(V,E_S)$, $E_S\subseteq E$, 
that distorts distances in $G$ up to a multiplicative factor $t$: i.e., for all vertices $x,y$,  
$d_S(x,y)\leq t\cdot d_G(x,y)$, where $d_G$ denotes the distance in graph $G$. We refer to $t$ as the \emph{stretch factor} (or \emph{distortion}) of the spanner $S$. 
It is known how to compute in $O(m+n)$ time a $(2k-1)$-spanner, with $O(n^{1+ \frac{1}{k}})$ edges~\cite{ADDJ93,HZ96} (which is conjectured to be optimal for any $k$),
where $m$ and $n$ are respectively the number of edges and vertices in the original graph $G$.
We note  that $t$-spanners are only considered for $t\geq 3$, as  $2$-spanners 
can have as many as $\Theta(n^2)$ edges.

Several other spanners have been considered in the literature.  
For $\alpha\geq 1$ and $\beta\geq 0$, an $(\alpha,\beta)$-spanner of an unweighted graph $G=(V,E)$ is a subgraph $S=(V,E_S)$, $E_S\subseteq E$, 
that distorts distances in $G$ up to a multiplicative factor $\alpha$ and an additive term $\beta$: i.e., for all vertices $x,y$,  
$d_S(x,y)\leq\alpha\cdot d_G(x,y)+\beta$.
In~\cite{Baswana2010}, it is shown how to compute a $(k,k-1)$-spanner containing $O(k \cdot n^{1+1/k})$ edges, for any integer $k \geq 2$. 
Note that $t$-spanners can be referenced to as $(t,0)$-spanners, while $(1,\beta)$-spanners are also known as \emph{purely additive spanners} ($d_S(x,y)\leq d_G(x,y)+\beta$). Algorithms for computing (1,2)-spanners with $O(n^{3/2})$ edges are given in~\cite{Aingworth,Dor,RodittyAdditive}, for (1,4)-spanners  with $\tilde{O}(n^{7/5})$ 
edges in~\cite{chechik4}, and for (1,6)-spanners with $O(n^{4/3})$ edges in~\cite{Baswana2010}.

Spanners have been investigated also in the fully dynamic setting, where edges may be added to or deleted
from the original graph. In~\cite{AFI06}, efficient dynamic deterministic algorithms are first presented for low-stretch spanners.
\commento{
\nota{PGF: attenzione, da qui in poi parliamo di termini additivi, quindi unweighted.}
a (2,1)-spanner and a (3,2)-spanner of an unweighted
graph are maintained under an intermixed sequence of $\Omega(n)$
edge insertions and deletions in $O(\Delta)$ amortized time per
operation, where $\Delta$ is the maximum vertex degree of the original
graph. The (2,1)-spanner has $O(n^{3/2})$ edges, while the
(3,2)-spanner has $O(n^{4/3})$ edges. 
}
A faster randomized dynamic
algorithm for spanners has been later proposed by
Baswana~\cite{ESA06}: given an unweighted graph, a
$(2k-1)$-spanner of expected size $O(k\cdot n^{1+1/k})$ can be maintained in
$O(\frac{m}{n^{1+1/k}} \cdot \polylog n)$ amortized expected time for
each edge insertion/deletion, where $m$ is the current number of edges in the graph.
For $k=2,3$ (i.e., 3- and
5-spanners, respectively), the amortized expected time of the randomized
algorithm becomes constant. The algorithm by Elkin~\cite{Elkin07} maintains a $(2k-1)$-spanner with expected
$O(k n^{1+1/k})$ edges in expected constant  time per edge insertion
and expected $O(\frac{m}{n^{1/k}})$ time per edge deletion. 
More recently, Baswana et al.~\cite{baswanaACM} proposed two faster fully dynamic randomized algorithms for maintaining $(2k-1)$-spanners of unweighted graphs: the expected amortized time  per insertion/deletion is $O(7^{k/2})$ for the first algorithm and $O(k^2 \log^2 n)$ for the second algorithm, and in both cases  the spanner expected size is optimal up to a polylogaritmic factor.

As observed in~\cite{STOC09}, this traditional fully dynamic model may be too pessimistic in several application scenarios,
where the possible changes to the underlying graph are rather
limited. Indeed, there are cases where there can be only temporary network failures: namely, graph edges may occasionally fail, but only for a short period of time, and it is possible to recover quickly 
from such failures. In those scenarios, rather than maintaining a fully dynamic spanner, which has to be updated after each change, one may be more interested in working with a static spanner capable of retaining many of its properties during edge deletions, i.e., capable of being resilient to transient failures.

Being inherently sparse, a spanner is not necessarily resilient to edge deletions and it may indeed lose some of its important properties during a transient failure. Indeed, let $S$ be a $t$-spanner of $G$:
if an edge $e$  fails in $G$, then 
the distortion of the spanner may substantially degrade, i.e., 
$S\setminus e$ may no longer be a $t$-spanner or even a valid spanner of $G\setminus e$, 
where $G \setminus e$ denotes the graph obtained after removing edge $e$ from $G$. 
In their pioneering work, Chechik et al.~\cite{STOC09} addressed this problem by introducing the notion of \emph{fault-tolerant spanners}, i.e., spanners that are resilient to edge (or vertex) failures.
Given an integer $f\geq 1$, a spanner is said to be $f$-edge (resp.~vertex) fault-tolerant if it preserves its original distortion under the failure of any set of at most $f$ edges (resp.~vertices). More formally, an \emph{$f$-edge (resp.~vertex) fault-tolerant $t$-spanner} of  $G=(V,E)$ is a subgraph $S=(V,E_S)$, $E_S\subseteq E$, such that for any subset $F \subseteq E$ (resp.~$F \subseteq V$), with $|F| \leq f$, and  for any pair of vertices $x,y \in V$ (resp.~$x,y \in V \setminus F$)  we have $d_{S \setminus F}(x,y) \leq t \cdot d_{G \setminus F}(x,y)$,
where $G \setminus F$ denotes the subgraph of $G$ obtained after deleting the edges (resp.~vertices) in $F$. Algorithms for computing efficiently fault-tolerant spanners can be found in~\cite{dmaa,ChechickAdditive,STOC09,Dinitz}.

The distortion is not the only property of a spanner that may degrade because of edge failures. Indeed, even when the removal of an edge cannot change the overall distortion of a spanner (such as in the case of a fault-tolerant spanner),  
it may still cause a sharp increase in some of its distances. Note that while the distortion is a \emph{global} property, distance increases are \emph{local} properties, as they are defined for pairs of vertices.
To address this problem, one would like to work with spanners that are not only \emph{globally resilient} (such as fault-tolerant spanners) but also \emph{locally resilient}. In other terms, we would like to make the 
distances between any pair of vertices in a spanner resilient to edge failures, i.e., whenever an edge fails, then the increases in distances in the spanner must be very close to the increases in distances in the underlying graph.
More formally, given a graph $G$ and an edge $e$ in $G$,
we define the \emph{fragility of edge} $e$ as the maximum relative increase in distance 
between any two vertices when $e$ is removed from $G$:
$$\frag_G(e) = \max_{x,y \in V} \left\{ \frac{\dist_{G\setminus e}(x,y)}{\dist_{G}(x,y)}\right\}$$
Our definition of fragility of an edge is somewhat reminiscent of the notion of \emph{shortcut value}, as contained in~\cite{Dagstuhl3}, where the distance increase is alternatively measured by the difference, instead of the ratio, between distances in $G\setminus e$ and in $G$. 
Note that for unweighted graphs
$\frag_G(e) \geq 2$ for any edge $e$.
The fragility of edge $e$ can be seen as a measure of how much $e$ is crucial for 
the distances in $G$, as it provides an upper bound to the increase in distance in $G$ between any pair of vertices when edge $e$ fails: the higher the fragility of $e$, the higher is the relative increase in some distance when $e$ is deleted. 

\vspace{0.3cm}
\noindent\textbf{Our contribution.}
To obtain spanners whose distances are resilient to transient edge failures, 
 the fragility of each edge in the spanner must be as close as possible to its fragility in the original graph. In this perspective, given a positive $\sigma$, we say that a spanner $S$ of $G$ is \emph{$\sigma$-resilient} if $\frag_S(e) \leq \max\{\sigma, \frag_G(e)\}$ for each edge $e \in S$, where $\sigma \geq 1$.
 Note that in case of unweighted graphs, for $\sigma=2$ this is equivalent to $\frag_S(e) = \frag_G(e)$.
 We show that finding sparse $2$-resilient spanners may be an overly ambitious goal, as we prove that there exists a family of dense graphs for which the only $2$-resilient spanner coincides with the graph itself.
It can be easily seen that, in general, spanners are not necessarily $\sigma$-resilient.
Furthermore, it can be shown that even
edge fault-tolerant multiplicative $t$-spanners are not $\sigma$-resilient, since they can only guarantee that the fragility of a spanner edge is at most $t$ times its fragility in the graph. In fact, we exhibit 1-edge fault tolerant $t$-spanners, for any $t \geq 3$, with edges whose fragility in the spanner is at least $t/2$ times their fragility in $G$.

It seems quite natural to ask whether
sparse $\sigma$-resilient spanners exist, and how efficiently they can be computed. 
We show that it is possible to compute $\sigma$-resilient $3$-spanners containing $O(W \cdot n^{3/2})$ edges for graphs with positive edge weights in $[w_{\mathrm{min}},w_{\mathrm{max}}]$, where $W = \frac{w_{\mathrm{max}}}{w_{\mathrm{min}}}$. The size is optimal for small edge weights. 
\commento{
In the case of unweighted graphs, in addition we are also able to compute 
(1,2)-spanners and (2,1)-spanners of optimal asymptotic size (i.e., containing $O(n^{3/2})$ edges)}
The total time required to compute our spanners is $O(mn + n^2 \log n)$ in the worst case. 
To compute our $\sigma$-resilient spanners, we start from a non-resilient spanner, and then add to it $O(W \cdot n^{3/2})$ edges from a carefully chosen set of short cycles in the original graph. The algorithm is simple and thus amenable to practical implementations, while the upper bound on the number of added edges is derived from sophisticated combinatorial arguments.

The same approach can be used for turning any given $t$-spanner into a $\sigma$-resilient $t$-spanner, for $\sigma \geq t > 3$.
\commento{, in case of weighted graphs, or for turning any given $(\alpha,\beta)$-spanner into a $\sigma$-resilient $(\alpha,\beta)$-spanner, for $\sigma \geq \alpha+\beta > 3$, in case of unweighted graphs. 
In all those cases, }
Once again, the total number of edges added to the initial spanners is $O(W \cdot n^{3/2})$.
Our results for $\sigma=t=3$ and for small edge weights seem to be the most significant ones, both from the theoretical and from the practical point of view. From a theoretical perspective, our $\sigma$-resilient $3$-spanners have the same asymptotic size as their non-resilient counterparts. From a practical perspective, there is empirical evidence \cite{ADFIR09} that small stretch spanners provide the best performance in terms of stretch/size trade-offs, and that spanners of larger stretch are not likely to be of practical value.
Table~\ref{tabella} puts our results in perspective with the fragility and the size of previously known spanners.

\commento{
\begin{table}[h]
\begin{small}
\begin{center}
\begin{tabular}{|c|c|c|c|c|}
\hline 
Spanner $S$ & $\dist_S(x,y)$ & $\frag_{S}(e)$ & Size & Ref. \\
\hline
\hline
\begin{tabular}{cc}
multiplicative \\\
$(2k-1)$-spanner, $k\geq 2$
\end{tabular}
& $\leq (2k-1) \cdot \dist_G(x,y)$ & unbounded & $O\left(n^{1+\frac{1}{k}}\right)$ & \cite{ADDJ93} \\
\hline
\begin{tabular}{cc}
additive $2$-spanner 
\end{tabular}
 & $\leq \dist_G(x,y)+2$ & unbounded & $O\left(n^{\frac{3}{2}}\right)$ & \cite{ACIM99}\\
\hline
\begin{tabular}{cc}
additive $6$-spanner 
\end{tabular}
 & $\leq \dist_G(x,y)+6$ & unbounded & $O\left(n^{\frac{4}{3}}\right)$ & \cite{Pettie}\\
\hline
\begin{tabular}{cc}
1-edge fault-tolerant \\\
 $(2k-1)$-spanner, $k\geq 2$ 
\end{tabular}
 & $\leq (2k-1) \cdot \dist_G(x,y)$ & $\leq (2k-1) \cdot \frag_{G}(e)$ & $O\left(n^{1+\frac{1}{k}}\right)$ & \cite{STOC09}\\
\hline
\begin{tabular}{cc}
$\sigma$-resilient 
$(2k-1)$-spanner, \\\
$\sigma \geq 2k-1\geq 3$ \\\
for weighted graphs
\end{tabular}
 & $\leq (2k-1)\cdot\dist_G(x,y)$ & $\leq \max\{\sigma, \frag_{G}(e)\}$ 
&
$O\left(n^{\frac{3}{2}}\right)$
& 
\begin{tabular}{cc}
This \\\
paper 
\end{tabular}
\\
\hline
\begin{tabular}{cc}
$\sigma$-resilient 
$(\alpha, \beta)$-spanner, \\\
$\sigma \geq \alpha + \beta\geq 3$, \\\
for unweighted graphs
\end{tabular}
 & $\leq \alpha\cdot\dist_G(x,y)+\beta$ & $\leq \max\{\sigma, \frag_{G}(e)\}$ 
&
$O\left(n^{\frac{3}{2}}\right)$
& 
\begin{tabular}{cc}
This \\\
paper 
\end{tabular}
\\
\hline
\end{tabular}
\vspace{2mm}
\caption{Fragility and size of spanners.}\label{tabella}
\end{center}
\end{small}
\end{table}
}

\begin{table}[h]
\begin{small}
\begin{center}
\begin{tabular}{|c|c|c|c|}
\hline 
Spanner $S$  & $\frag_{S}(e)$ & Size & Ref. \\
\hline
\hline
\begin{tabular}{cc}
$(2k-1)$-spanner, $k\geq 2$
\end{tabular}
 & unbounded & $O\left(n^{1+\frac{1}{k}}\right)$ & \cite{ADDJ93} \\
\hline
\begin{tabular}{cc}
1-edge fault-tolerant \\\
 $(2k-1)$-spanner, $k\geq 2$ 
\end{tabular}
& $\leq (2k-1) \cdot \frag_{G}(e)$ & $O\left(n^{1+\frac{1}{k}}\right)$ & \cite{STOC09}\\
\hline
\begin{tabular}{cc}
$\sigma$-resilient 
$(2k-1)$-spanner, \\\
$\sigma \geq 2k-1$, $k\geq 2$
\end{tabular}
& $\leq \max\{\sigma, \frag_{G}(e)\}$ 
&
$O\left(W \cdot n^{\frac{3}{2}}\right)$
& 
\begin{tabular}{cc}
This \\\
paper 
\end{tabular}
\\
\hline
\end{tabular}
\vspace{2mm}
\caption{Fragility and size of spanners. Factor $W$ in the last line is the ratio between maximum and minimum edge weight.}\label{tabella}
\end{center}
\end{small}
\end{table}

Also $(\alpha,\beta)$-spanners of unweighted graphs can be turned into $\sigma$-resilient $(\alpha,\beta)$-spanners, for any $\sigma \geq \alpha+\beta$, using the same technique, adding $O(n^{3/2})$ edges in the worst case.

The remainder of this paper is organized as follows. We start with few preliminary definitions and basic observations in Section~\ref{se:preliminary}. In Section~\ref{se:negative} we show some negative results on 2-resilient spanners and 1-edge fault-tolerant spanners. 
In Section~\ref{se:spanners} we describe our algorithm for computing $\sigma$-resilient spanners. In particular, we
 first describe in Section~\ref{se:trivial} a trivial approach. Next, in Section~\ref{se:size}, we show how to bound the size of  $\sigma$-resilient spanners. Finally, in Section~\ref{se:computing}, we show how to compute efficiently $\sigma$-resilient spanners.
\commento{
Our technique for computing $\sigma$-resilient spanners is then outlined in Section~\ref{se:trivial}. In Section~\ref{se:size} we prove that $\sigma$-resilient spanners containing $O(W \cdot n^{3/2})$ edges exist for $\sigma \geq 3$, and in Section~\ref{se:computing} we describe an efficient algorithm for computing them.   
}
Section~\ref{se_conclusions} lists some concluding remarks.

\section{Preliminaries}
\label{se:preliminary}

We assume that the reader is familiar with standard graph terminology.
In our paper, we deal with weighted undirected graphs, i.e., undirected graphs having weights associated to their edges. The length of a path is the sum of the weights of its edges. In unweighted graphs the length of a path is given by the number of its edges. Note that unweighted graphs can be seen as special cases of weighted graphs, where all the weights are 1.
A shortest path is a path of minimum length between two vertices, and the distance between two vertices is given by the length of a shortest path between the two vertices. Let $G=(V,E)$ be an undirected graph. Throughout this paper, we use the notation $\dist_G(u,v)$ to denote the distance between vertices $u$ and $v$ in $G$. Let $F\subseteq E$ be any subset of edges in $G$: we denote by $G\setminus F$ the graph obtained after deleting edges in $F$ from $G$. Note that, as a special case
$G\setminus e$ denotes the graph obtained after deleting edge $e$ from $G$. Similary,  we let  $G\cup H$ denote the graph obtained after adding edges in $H$ to $G$, where $H$ and $G$ have the same set of vertices. 

Let $G=(V,E)$ be an undirected 
(weighted or unweighted) graph, with $m$ edges and $n$ vertices. 
A \emph{bridge} is an edge  $e\in E$ whose deletion increases the number of connected components of $G$. 
Note that an edge is a bridge if and only if it is not contained in any cycle of $G$. Graph $G$ is \emph{2-edge-connected} if it does not have any bridges. The \emph{2-edge-connected components} of $G$ are its maximal 2-edge-connected subgraphs.
Let $e$ be an edge in $G$, and denote by  ${\cal C}_e$ 
the set of all the cycles containing $e$: if $G$ is 2-edge-connected, then ${\cal C}_e$ is non-empty for each $e\in E$.
We refer to a
shortest cycle among all cycles in ${\cal C}_e$ 
as a \emph{short cycle for} edge $e$.
Note that if $G$ is 2-edge-connected, then at least one short cycle 
always exists for any edge. Short cycles are not necessarily unique: for each $e\in E$, we denote by $\Gamma_{e}(G)$ the set of short cycles for $e$ in graph $G$.
Let $G$ be an undirected unweighted graph: the \emph{girth} of $G$, denoted by \emph{girth}$(G)$, is the length of a shortest cycle in $G$.

\commento{
\nota{PGF: queste definizioni servono solo a definire poi l'insieme di short cycles $\Gamma_e$. Potremmo definire direttamente qui $\Gamma_e$}
Similarly, we denote by 
${\cal P}_{e}(x,y)$ the set of all paths between $x$ and $y$ containing edge $e$, and by 
${\cal P}_{\overline{e}}(x,y)$ the set of all paths between $x$ and $y$ avoiding edge $e$.
We further denote by $\Pi_{e}(x,y)$ (respectively $\Pi_{\overline{e}}(x,y)$)
the set of shortest paths in ${\cal P}_{e}(x,y)$ (respectively ${\cal P}_{\overline{e}}(x,y)$).
}
The \emph{fragility} of an edge $e=(u,v)$ in graph $G$ is defined as
$\frag_G(e) = \max_{x,y \in V} \left\{\frac{\dist_{G\setminus e}(x,y)}{\dist_{G}(x,y)}\right\}$. 
Given a graph $G$ and a $t$-spanner $S$ of $G$, and given $\sigma \geq 1$, we say that edge $e$ is \emph{$\sigma$-fragile in $S$} if $\frag_S(e) > \max\braces{\sigma, \frag_G(e)}$. A $t$-spanner $R$ is \emph{$\sigma$-resilient} if $\frag_R(e) \leq \max\{\sigma, \frag_G(e)\}$ for each edge $e \in R$, i.e., if $R$ does not contain $\sigma$-fragile edges.

The following lemma shows that in the definition of fragility of an edge $e=(u,v)$, the maximum is obtained for $\{x,y\}=\{u,v\}$, i.e., exactly at its two endpoints.

\begin{lemma}\label{le:shortcutratio}
Let $G=(V,E)$ be a connected graph with positive edge weights, and let $e=(u,v)$ be any edge in $G$.
Then $\frag_G(e) = \frac{\dist_{G\setminus e}(u,v)}{\dist_{G}(u,v)}$.
\end{lemma}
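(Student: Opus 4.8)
The goal is to show that the ratio $\dist_{G\setminus e}(x,y)/\dist_G(x,y)$ is maximized by the pair $\{x,y\}=\{u,v\}$. The plan is to fix an arbitrary pair $x,y$ and to show that
$$\frac{\dist_{G\setminus e}(x,y)}{\dist_G(x,y)} \leq \frac{\dist_{G\setminus e}(u,v)}{\dist_G(u,v)}.$$
The only interesting case is when a shortest path in $G$ from $x$ to $y$ actually uses the edge $e=(u,v)$; otherwise $\dist_{G\setminus e}(x,y)=\dist_G(x,y)$ and the ratio is $1$, which is at most the right-hand side (recall the ratio is always $\geq 1$, and $\geq 2$ in the unweighted case). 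So assume some shortest $x$–$y$ path $P$ in $G$ traverses $e$, say in the order $x,\dots,u,v,\dots,y$. Write $a=\dist_G(x,u)$, $b=\dist_G(v,y)$, and $w=w(e)$, so that $\dist_G(x,y)=a+w+b$.

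First I would bound the numerator. In $G\setminus e$, one route from $x$ to $y$ is: go from $x$ to $u$ along the subpath of $P$ (which avoids $e$ since $P$ uses $e$ only once), then from $u$ to $v$ along a shortest path in $G\setminus e$, then from $v$ to $y$ along the subpath of $P$. This gives
$$\dist_{G\setminus e}(x,y) \leq a + \dist_{G\setminus e}(u,v) + b.$$
Now set $D = \dist_{G\setminus e}(u,v)$ and $d = \dist_G(u,v) = w$ (the edge $e$ itself realizes the distance between its endpoints, since weights are positive). We must compare $\frac{a+D+b}{a+d+b}$ with $\frac{D}{d}$. Since $e$ is not a bridge is not assumed here — but note if $e$ \emph{is} a bridge, $\dist_{G\setminus e}(u,v)=\infty$ and both sides are $\infty$, so the identity holds trivially; otherwise $D$ is finite. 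Because $D \geq d$ (removing an edge cannot decrease a distance) and $a,b\geq 0$, the elementary inequality $\frac{a+D+b}{a+d+b} \leq \frac{D}{d}$ holds: cross-multiplying, it is equivalent to $d(a+b) \leq D(a+b)$, i.e. $(D-d)(a+b)\geq 0$, which is true. Chaining the two displays yields
$$\frac{\dist_{G\setminus e}(x,y)}{\dist_G(x,y)} \leq \frac{a+D+b}{a+d+b} \leq \frac{D}{d} = \frac{\dist_{G\setminus e}(u,v)}{\dist_G(u,v)},$$
and since the pair $\{u,v\}$ itself is admissible in the maximum defining $\frag_G(e)$, equality is attained there, proving the claim.

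I do not expect a serious obstacle here; the argument is a short computation. The one point requiring a little care is the case analysis on whether a shortest $x$–$y$ path uses $e$ at all, and the degenerate situations ($e$ a bridge, or $x=y$), which should be dispatched in a sentence each. A secondary subtlety is making sure the decomposition $P = P_{xu}\cdot e \cdot P_{vy}$ is valid — i.e. that $P$ does not revisit $e$ or its endpoints in a way that breaks the subpath argument — but since $P$ is a \emph{shortest} path it is simple, so its subpaths $P_{xu}$ and $P_{vy}$ are well-defined and edge-disjoint from $e$.
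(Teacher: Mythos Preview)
Your proof is correct and follows essentially the same route as the paper's: case-split on whether a shortest $x$--$y$ path uses $e$, reroute through a shortest $u$--$v$ path in $G\setminus e$ to obtain $\dist_{G\setminus e}(x,y)\le a+D+b$, and finish with the elementary fraction inequality (the paper phrases this last step as ``subtract $a+b\ge 0$ from both numerator and denominator''). One small remark: the reason $\dist_G(u,v)=w(e)$ is not ``weights are positive'' but rather that $e$ lies on the shortest path $P$, so its single-edge subpath is itself a shortest $u$--$v$ path.
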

\begin{proof}
Let $x$ and $y$ be any two vertices in $G$.
To prove the lemma it suffices to show that 
$\frac{\dist_{G \setminus e}(x,y)}{\dist_{G}(x,y)} \leq \frac{\dist_{G \setminus e}(u,v)}{\dist_{G}(u,v)}$.
We distinguish two cases, depending on whether there is a shortest path in $G$ between $x$ and $y$  that 
avoids edge $e$ or not.
If there is such a shortest path, then $\dist_{G \setminus e}(x,y) = \dist_{G}(x,y)$. Since $\dist_{G \setminus e}(u,v)\geq {\dist_{G}(u,v)}$,
the lemma trivially holds. 

Assume now that all shortest paths between $x$ and $y$ in $G$ go through edge $e=(u,v)$. In this case, $\dist_G(x,y) \geq \dist_G(u,v)$. 
If edge $e$ is a bridge, then  $\frag_G(e) = \frac{\dist_{G\setminus e}(u,v)}{\dist_{G}(u,v)} =+\infty$, and again the lemma trivially holds.
If $e$ is not a bridge, then the graph $G\setminus e$ is connected.
Since there is at least a (not necessarily shortest) path in $G\setminus e$ between $x$ and $y$ containing the shortest path in $G\setminus e$ from $u$ to $v$, we have that
$\dist_{G\setminus e}(x,y) \leq  \dist_G(x,y) - \dist_G(u,v) + \dist_{G\setminus e}(u,v)$, 
or equivalently
\begin{equation}\frac{\dist_{G \setminus e}(x,y)}{\dist_{G}(x,y)}
\leq
\frac{\dist_G(x, y) - \dist_G(u,v) + \dist_{G\setminus e}(u,v)}{\dist_{G}(x,y)}
\label{eq:1}
\end{equation}
Since $\dist_G(x, y) - \dist_G(u,v) + \dist_{G\setminus e}(u,v)\geq \dist_{G}(x,y)$, we can
upper bound the right-hand side of~(\ref{eq:1}) by subtracting $\dist_G(x,y) - \dist_G(u,v)\geq 0$ from both its numerator and denominator, thus yielding the lemma.
\end{proof}

Note that for unweighted graphs ${\dist_{G}(u,v)}=1$, and thus 
Lemma~\ref{le:shortcutratio} can be simply stated as $\frag_G(e) =
{\dist_{G\setminus e}(u,v)}$. 
The following simple lemma shows that, when inserting new edges into a graph $G$, the fragility of the old  edges cannot increase.

\begin{lemma}\label{le:monotone}
Let $G$ and $H$ be any pair of weighted graphs on the same set of vertices, and let  $G\cup H$ be the graph obtained after adding edges in $H$ to $G$. 
Then, $\frag_{G\cup H}(e) \leq \frag_{G}(e)$ for each edge $e$ in $G$.
\end{lemma}
\begin{proof}
Consider an edge $e=(u,v)$ in $G$. Since $G \subseteq G\cup H$,  $\dist_{G\cup H}(u,v) \leq \dist_{G}(u,v)$.
If $\dist_{G\cup H}(u,v) < \dist_{G}(u,v)$, a shortest path from $u$ to $v$ in $G\cup H$ avoids edge $e$. This is equivalent to saying that $\dist_{(G\cup H) \setminus e}(u,v) = \dist_{G\cup H}(u,v)$, and hence $\frag_{G\cup H}(e) = 1 \leq \frag_{G}(e)$.
Otherwise, $\dist_{G\cup H}(u,v) = \dist_{G}(u,v)$ and $\dist_{(G\cup H) \setminus e}(u,v) \leq \dist_{G\setminus e}(u,v)$: again, $\frag_{G\cup H}(e) \leq \frag_{G_1}(e)$. 
\end{proof}


The fragility of all edges in a graph $G=(V,E)$ with positive edge weights can be trivially computed in a total of $O(m^2 n + mn^2 \log n)$ worst-case time by simply computing, for each edge $e\in E$, all-pairs shortest paths in graph $G\setminus e$. A faster bound of $O(m n +n^2\log n)$ can be achieved by using either a careful modification of algorithm \texttt{fast-exclude} in~\cite{CamilThorup} or by applying $n$ times a modified version of Dijkstra's algorithm, as described in~\cite{Dagstuhl4}. For unweighted graphs, the above bound reduces to $O(m n)$.

\section{Some negative results}\label{se:negative}
In this section we show some negative results on 2-resilient spanners and edge fault-tolerant spanners.
We first establish that finding sparse 2-resilient spanners may be an overly ambitious goal, as there are dense graphs for which the only 2-resilient spanner is the graph itself.

\begin{theorem}\label{th:lowerbound}
There is an infinite family $\cal{F}$ of graphs such that for each graph $G \in \cal{F}$ the following properties hold:
\begin{mylist}{(1)}
	\litem{(1)}  $G$ has $\Theta(n^{\delta})$ edges, with $\delta>1.72598$, where $n$ is the number of vertices of $G$.
	\litem{(2)} No proper subgraph of $G$ is a 2-resilient spanner of $G$.
	\litem{(3)} There exists a 2-spanner $S$ of $G$ such that  $\Theta(n^{\delta})$ edges of $G\setminus S$, with $\delta>1.72598$, need to be added back to $S$ in order to make it 2-resilient.
\end{mylist}
\end{theorem}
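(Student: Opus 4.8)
To prove Theorem~\ref{th:lowerbound} we need an explicit infinite family $\cal{F}$ of dense graphs in which \emph{every} edge is simultaneously "critical" — in the sense that its short cycle is a triangle — and yet the graph is very far from being a triangle-free-complement-type structure that would admit sparse 2-spanners. The family we have in mind (suggested already by the commented-out remark in the introduction) is the Kneser-type graph $K(3k,k)$: vertices are the $k$-subsets of a $(3k)$-element ground set, and two vertices are adjacent iff the corresponding $k$-sets are disjoint. We will establish (1), (2), (3) in turn, deriving (2) and (3) from a single structural claim about short cycles.

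First I would pin down the basic parameters of $G = K(3k,k)$. The number of vertices is $n = \binom{3k}{k}$ and the degree of every vertex is $\binom{2k}{k}$, so $m = \frac12 \binom{3k}{k}\binom{2k}{k} = \Theta(n^{\delta})$ where $\delta = 1 + \frac{\log\binom{2k}{k}}{\log\binom{3k}{k}} \to 1 + \frac{\log 4}{\log(27/4)} = 1 + \frac{2}{\log_2(27/4)}$ as $k\to\infty$; a direct computation via Stirling gives $\delta > 1.72598$ for all sufficiently large $k$, which is exactly what part (1) asks (pick $\cal F$ to be these graphs for large enough $k$). I would present this as a short Stirling-estimate lemma. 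The key qualitative facts I then need are: (a) $G$ is connected and indeed $2$-edge-connected (so fragilities are finite), and (b) for every edge $e=(u,v)$ there is a common neighbour, i.e. a $k$-set $w$ disjoint from both $u$ and $v$ — but since $u,v$ are already disjoint, $u\cup v$ has size $2k$ and its complement has size $k$, so $w := (\text{ground set})\setminus(u\cup v)$ is the \emph{unique} such vertex. Hence every edge lies on exactly one triangle, $\{u,v,w\}$, and that triangle is its only short cycle: $\girth(G)=3$ and $\Gamma_e=\{\{u,v,w\}\}$.

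Now the heart of the argument. In an unweighted graph, Lemma~\ref{le:shortcutratio} gives $\frag_G(e) = \dist_{G\setminus e}(u,v)$; since $u,v,w$ form a triangle, $\dist_{G\setminus e}(u,v)\le 2$, and as $G$ is unweighted we always have $\frag_G(e)\ge 2$, so $\frag_G(e)=2$ for every edge. A subgraph $S$ is a 2-resilient spanner of $G$ iff it is a spanner (here any subgraph containing a spanning connected structure with the right stretch, but the binding constraint is resilience) and $\frag_S(e)\le 2$ for every $e\in S$, i.e. $\dist_{S\setminus e}(u,v)\le 2$ for every $e=(u,v)\in S$. But the only length-$2$ $u$–$v$ path in $G$ is $u\!-\!w\!-\!v$, because $w$ is the unique common neighbour of $u$ and $v$. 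Therefore $\frag_S(e)\le 2$ forces both edges $(u,w)$ and $(v,w)$ to be present in $S$. Running this over all edges $e\in S$: if $S$ contains edge $(u,v)$ it must contain the whole triangle through $(u,v)$; iterating, $S$ must contain the triangle through $(u,w)$ and the triangle through $(v,w)$, and so on. I would then observe that $G$ is such that the "triangle-closure" of any single edge is all of $E$ — equivalently, the hypergraph whose hyperedges are the triangles is connected on the edge set — which yields $S = G$, proving (2). (If one prefers to avoid establishing triangle-connectivity globally, a cleaner route: $S$ must be a 2-spanner, hence connected and spanning; then pick any edge $e\in S$ and show the triangle-closure already forces $\Omega(n^\delta)$ edges, which is all we need for (3) and suffices for a slightly weaker but still valid phrasing of (2) — but I expect full triangle-connectivity of $K(3k,k)$ to hold and to be provable by a short swap/parity argument on $k$-sets.)

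For part (3), start from \emph{any} 2-spanner $S_0$ of $G$ that is a proper subgraph — these exist since $G$ has girth $3$ and is far from extremal, e.g. take a BFS-tree-based construction or simply note $m = \omega(n^{3/2})$ so the standard greedy $(2k-1)$-spanner with $k=2$ gives $O(n^{3/2}) = o(m)$ edges and stretch $3$, and one checks it actually achieves stretch $2$ here using the triangle structure, or alternatively just exhibit one explicit sparse 2-spanner. Then by the argument above, making $S_0$ 2-resilient forces adding, for every edge $(u,v)\in S_0$, the edges $(u,w),(v,w)$, and by triangle-connectivity the closure is all of $E$; since $|E\setminus S_0| = m - O(n^{3/2}) = \Theta(n^\delta)$, we must add back $\Theta(n^\delta)$ edges, giving (3).

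**Main obstacle.** The one nontrivial combinatorial point is the "triangle-connectivity" of $K(3k,k)$: that the graph on the edge set $E$, with $e\sim e'$ whenever $e,e'$ share a triangle, is connected. Equivalently, any two disjoint-pair relations can be linked by a chain of moves each of which swaps one element in or out while preserving disjointness. I would prove this by a direct exchange argument: given two edges $\{u,v\}$ and $\{u',v'\}$ (ordered disjoint pairs), change $u$ into $u'$ one element at a time, at each step routing through the forced third vertex $w$ to stay inside a triangle, and checking that the intermediate $k$-sets can always be chosen disjoint from their partners because the ground set has the "extra" $k$ elements to spare ($3k$ vs $2k$). This is where the constant $3$ (as in $3k$) is used essentially — with $2k$ the complement would be empty and no triangles would exist, and with more slack the family would be even denser but the bound $1.72598$ is tied to the ratio $27/4$. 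I expect this exchange lemma to be the only place requiring real care; everything else is bookkeeping with Stirling's formula and a direct application of Lemma~\ref{le:shortcutratio}.
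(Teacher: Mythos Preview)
Your construction and the counting for part~(1) match the paper exactly, and you correctly identify the key structural fact: for adjacent $v_A,v_B$ the \emph{unique} common neighbour is $v_C$ with $C=U\setminus(A\cup B)$, so every edge lies in exactly one triangle and $\frag_G(e)=2$ throughout.

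However, your proof of~(2) via ``triangle-closure'' breaks down precisely because of that uniqueness. You argue that if $(u,v)\in S$ then 2-resilience forces $(u,w),(v,w)\in S$, and then \emph{iterate} by taking the triangle through $(u,w)$. But the triangle through $(u,w)=(v_A,v_C)$ has as its third vertex the unique common neighbour of $v_A$ and $v_C$, namely $v_{U\setminus(A\cup C)}=v_B$: it is the \emph{same} triangle $\{v_A,v_B,v_C\}$. The triangles therefore partition the edge set, the ``hypergraph whose hyperedges are the triangles'' is a disjoint union of $K_3$'s on $E(G)$, and your triangle-connectivity claim is simply false for $k\geq 2$. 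The closure of a single edge is three edges, not all of $E$, so your exchange-lemma programme cannot succeed.

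The paper sidesteps this entirely. For~(2) it assumes $S$ is a proper 2-spanner, picks a missing edge $(v_A,v_B)$, observes that the 2-spanner property forces $(v_A,v_C),(v_C,v_B)\in S$, and then notes that the \emph{only} length-2 $v_A$--$v_C$ path in $G$ is $v_A,v_B,v_C$, which uses the missing edge; hence $\frag_S((v_A,v_C))>2$. One step, no closure. For~(3) it does not look for a sparse 2-spanner (indeed, the argument above shows any 2-spanner of $I_{3k}$ must keep at least two edges of every triangle, hence has $\geq 2m/3$ edges, so your $O(n^{3/2})$-edge 2-spanner does not exist); instead it explicitly deletes one edge from each triangle, obtaining a 2-spanner missing $m/3=\Theta(n^{\delta})$ edges, all of which must be restored by~(2).
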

\begin{proof}
The family $\cal{F}$ is defined as the set of graphs $\{I_3, I_6, I_9, \ldots, I_{3k}, \ldots\}$, with each $I_{3k}$ being the complement of the intersection graph of all the $k$-sets contained in a $3k$-set, $k\geq 1$. Given a set $U$, with $|U|=3k$, graph $I_{3k}$ contains a vertex $v_A$ for each subset $A \subset U$ with $|A| = k$, and vertex $v_A$ is adjacent to vertex $v_B$ if and only if $A\cap B = \emptyset$.

	Graph $I_{3k}$ has ${3k \choose k}$ vertices, each having degree ${2k \choose k}$, since this is the number of $k$-sets that can be chosen from the remaining $2k$ elements. Let $n={3k \choose k}$ be the number of vertices in $I_{3k}$: then $I_{3k}$ has
$m = \frac{n}{2} {2k \choose k}$ edges. By Stirling approximation,
we have that $m = \Theta\left(n^{1 + \frac{2}{3 \log_2 3 -2}}\right)$, where $\frac{2}{3 \log_2 3 -2} > 0.72598$. This proves Property (1).

We now turn to Property (2). We first claim that there is only one path of length 2 between any pair of adjacent vertices in $I_{3k}$. Indeed,  for any two adjacent vertices $v_A$ and $v_B$, there is exactly one vertex, namely $v_{U\setminus(A \cup B)}$, which is adjacent to both $v_A$ and $v_B$. Thus each edge belongs to exactly one triangle, which implies that the fragility of any edge in $I_{3k}$ is 2, and that there is only one path of length 2 between any pair of adjacent vertices.
Let $S\subset I_{3k}$ be a 2-spanner of $I_{3k}$. We show that $S$ is not 2-resilient.
Let $v_A$ and $v_B$ be two adjacent vertices in $I_{3k}$ such that $(v_A, v_B) \not\in S$, and let $C = U \setminus (A \cup B)$.
We know that $v_A, v_C, v_B$ is the only path of length 2 from $v_A$ to $v_B$ in $I_{3k}$. Since $S$ is a 2-spanner of $I_{3k}$, both $(v_A, v_C)$ and $(v_C, v_B)$ must be in $S$.
For the same reason above, the only path of length 2 in $I_{3k}$ from $v_A$ to $v_C$ is $v_A, v_B, v_C$,
so $\dist_{S\setminus\{(v_A, v_C)\}}(v_A, v_C) > 2$, because $(v_A, v_B) \not\in S$. Thus $\frag_S((v_A, v_C)) > 2$, while $\frag_{I_{3k}}((v_A, v_C)) = 2$, which implies that $S$ is not 2-resilient.

To prove Property (3), let $S$ be a subgraph of $I_{3k}$ obtained by deleting exactly one edge from each triangle in $I_{3k}$. Since each edge in $I_{3k}$ is contained in exactly one triangle, there is always such an $S$, and it is a 2-spanner of $I_{3k}$. Furthermore, 
by Property (1), $\frac{m}{3} =\Theta(n^{\delta})$ edges, with $\delta>1.72598$,
 have to be deleted from $I_{3k}$ in order to produce $S$. 
By Property (2),  $\Theta(n^{\delta})$ edges of $I_{3k}\setminus S$, with $\delta>1.72598$, need to be added back to $S$ in order to make it a 2-resilient 2-spanner of $I_{3k}$.
\end{proof}

Edge fault-tolerant spanners~\cite{STOC09}  provide a simple way to bound distance increases under edge faults. However, they are not $\sigma$-resilient, as the next theorem shows.

\begin{theorem}\label{th:boundedshval}
Let $G=(V,E)$ be a graph.
\begin{mylist}{(a)}
\litem{(a)}
Let $S_f$ be any 1-edge fault tolerant $t$-spanner of $G$. Then $\frag_{S_f}(e) \leq t \cdot \frag_G(e)$ for each $e \in S_f$.
\litem{(b)}
There exist 1-edge fault-tolerant $t$-spanners that are not $\sigma$-resilient, for
any $\sigma < t/2$.
\end{mylist}
\end{theorem}
\begin{proof}
We first prove (a). By definition of 1-edge fault tolerant $t$-spanner, we have  
$\dist_{S_f \setminus e}(u,v) \leq t \cdot \dist_{G \setminus e}(u,v)$ for each edge $e=(u,v)$, and since $e \in S_f$ we have 
$\dist_{S_f}(u,v) = \dist_{G}(u,v)$.
The fragility of $e$ in $S_f$ is then:
$$\frag_{S_f}(e) = \frac{\dist_{S_f \setminus e}(u,v)}{\dist_{S_f}(u,v)}
\leq
\frac{t \cdot \dist_{G \setminus e}(u,v)}{\dist_{S_f}(u,v)}
=
\frac{t \cdot \dist_{G \setminus e}(u,v)}{\dist_{G}(u,v)}
=
t \cdot \frag_G(e).$$
To prove (b), consider the graph illustrated in
Figure~\ref{fi:noresilient}. The subgraph defined by bold edges (i.e., the whole graph except edges $e_i$,
with $1 \leq i \leq t$) is a 1-edge fault tolerant $t$-spanner. The fragility of edge $e$ in the original
graph is $t$, while its fragility in the spanner is $t^2/2$, i.e., it is greater than the fragility in the original graph by a factor of $t/2$. 
\end{proof}
\begin{figure}[t]
\begin{center}
\includegraphics[width=12cm]{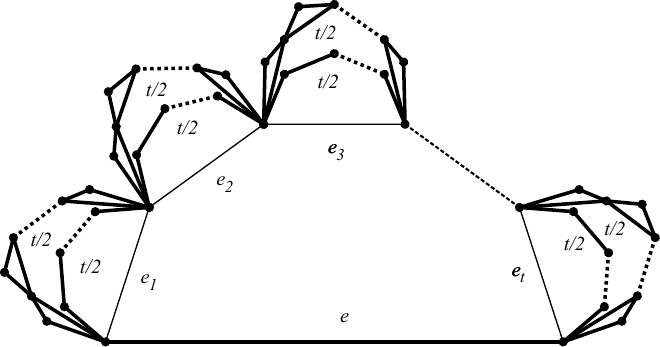}
\end{center}
\caption{A 1-edge fault tolerant $t$-spanner that is not $\sigma$-resilient for any $\sigma<t/2$. Edges $e_i$, $1 \leq i \leq t$, are not included in the $t$-spanner.}\protect\label{fi:noresilient}
\end{figure}	

\section{On $\sigma$-resilient spanners}
\label{se:spanners}

In this section we present our algorithm for computing $\sigma$-resilient spanners,
We start by describing a trivial approach in Section~\ref{se:trivial}. Then, in Section~\ref{se:size}, we introduce the notion of parsimonious sequence of cycles, which allows us to bound the size of a $\sigma$-resilient spanner. Finally, in Section~\ref{se:computing}, we show how to compute efficiently a parsimonious sequence of cycles.

\subsection{A simple-minded algorithm for computing $\sigma$-resilient spanners}\label{se:trivial}
Given a graph $G$, a $\sigma$-resilient spanner $R$ of $G$ can be computed with the following simple approach:

\begin{enumerate}
	\item Initialize $R$ to $S$, where $S$ is any $t$-spanner of $G$, with $t \leq\sigma$.
	\item For each  edge $e=(u,v)$ that is $\sigma$-fragile in $S$, 
	select a shortest path between $u$ and $v$ in $G\setminus e$ and add it to $R$.
\end{enumerate}

We refer to a shortest path between $u$ and $v$ in $G\setminus (u,v)$ as a \emph{backup path} for edge $(u,v)$. The correctness of our approach hinges on the following theorem.

\commento{
\begin{remark}\label{le:monotone}
Let $G=(V,E)$ be any weighted graph, and let  $G_1=(V,E_1)$ and $G_2=(V,E_{2})$ be any two subgraphs of $G$, with $E_{1} \subseteq E_{2} \subseteq E$. Then, $\frag_{G_2}(e) \leq \frag_{G_1}(e)$ for each edge $e \in E_1$.
\end{remark}
\begin{proof}
Consider an edge $e=(u,v) \in G_1$. Since $E_{1} \subseteq E_{2} $,  $\dist_{G_2}(u,v) \leq \dist_{G_1}(u,v)$.
If $\dist_{G_2}(u,v) < \dist_{G_1}(u,v)$, a shortest path from $u$ to $v$ in $G_2$ avoids edge $e$. This is equivalent to saying that $\dist_{G_2 \setminus e}(u,v) = \dist_{G_2}(u,v)$, and hence $\frag_{G_2}(e) = 1 \leq \frag_{G_1}(e)$.
Otherwise, $\dist_{G_2}(u,v) = \dist_{G_1}(u,v)$ and $\dist_{G_2 \setminus e}(u,v) \leq \dist_{G_1\setminus e}(u,v)$: again, $\frag_{G_2}(e) \leq \frag_{G_1}(e)$. 
\end{proof}
}

\begin{theorem}\label{th:correctweigthed}
Let $S$ be a $t$-spanner of a weighted graph $G$, and let 
$R$ be computed by adding to $S$ a backup path for each $\sigma$-fragile edge $e$ in $S$, with $\sigma \geq t$. Then $R$ is a $\sigma$-resilient $t$-spanner of $G$.
\end{theorem}
\begin{proof} 
$R$ is trivially a $t$-spanner of $G$, since it contains a
$t$-spanner $S$. It remains to show that $R$ is $\sigma$-resilient, i.e., 
$\frag_R(e) \leq \max\{\sigma, \frag_G(e)\}$ for each edge $e \in R$.
We first claim that this holds for each edge $e \in R \setminus S$.
Indeed, in this case we have that
$\dist_{R\setminus e}(u,v) \leq \dist_{S}(u,v) \leq t \cdot \dist_G(u,v)$, where the latter inequality follows immediately from the fact that $S$ is a $t$-spanner of $G$, hence also $R \setminus e$ is a $t$-spanner of $G$. This implies that
		$$\frag_R(e) = \frac{\dist_{R \setminus e}(u,v)}{\dist_R(u,v)} \leq t\cdot\frac{\dist_G(u,v)}{\dist_R(u,v)}
		\leq t \leq \sigma \leq \max\{\sigma, \frag_G(e)\}.$$

To complete the proof, it suffices to show that $\frag_R(e) \leq \max\{\sigma, \frag_G(e)\}$ for each edge $e \in S$. 
Let $e=(u,v)$ be any edge in $S$. If $\frag_S(e)\leq\sigma$, the fact that $S\subseteq R$ implies by Lemma~\ref{le:monotone} that $\frag_R(e)\leq\sigma\leq \max\{\sigma, \frag_G(e)\}$. If $\frag_S(e) > \sigma$, then 
a shortest path between $u$ and $v$ in $G\setminus e$ is added as 
a backup path for $e$, at which point the fragility of edge $e$ in the resulting graph will be equal to $\frag_G(e)$. Once again, Lemma~\ref{le:monotone} guarantees that the fragility of $e$ will never decrease after adding other backup paths for different edges. At the end, when all the backup paths have been added, we will have $\frag_R(e) \leq \frag_G(e)\leq \max\{\sigma, \frag_G(e)\}$.
\end{proof}

\commento{
Remark~\ref{le:monotone} guarantees that once $\frag_R(e) \leq \max\{\sigma, \frag_G(e)\}$, possibly after adding a backup path for $e$, then the same bound will hold also after adding to $R$ other backup paths for different edges. 
Thus, to prove the theorem, we need only to prove that:
\begin{mylist}{(1)}
\litem{(1)} For each edge $e \in S$ such that $\frag_S(e) > \max\{\sigma, \frag_G(e)\}$, then after adding a backup path for $e$ we have that $\frag_R(e) \leq\max\{\sigma, \frag_G(e)\}$;
\litem{(2)} For each edge $e \in R \setminus S$,  $\frag_R(e) \leq \max\{\sigma, \frag_G(e)\}$.
\end{mylist}

Let $e = (u,v)$ be any edge in $R$. We distinguish two cases, depending on whether $e$ is in the initial $t$-spanner $S$ or not.

Assume first that $e$ is in $S$.
If $\frag_S(e) \leq \sigma$ then also $\frag_R(e) \leq \sigma\leq \max\{\sigma, \frag_G(e)\}$.
On the other side, if $\frag_S(e) > \sigma$, 

To prove (1), we observe that after we add to $S$ a shortest path between $u$ and $v$ in $G\setminus e$, we have $\frag_R(e) = \frag_G(e)\leq \max\{\sigma, \frag_G(e)\}$.	

If edge $e$ is not in $S$ (i.e., $e \in R \setminus S$), we have that
$\dist_{R\setminus e}(u,v) \leq \dist_{S}(u,v) \leq t \cdot \dist_G(u,v)$, where the latter inequality follows immediately from the fact that $S$ is a $t$-spanner of $G$. This implies that
		$$\frag_R(e) = \frac{\dist_{R \setminus e}(u,v)}{\dist_R(u,v)} \leq t\cdot\frac{\dist_G(u,v)}{\dist_R(u,v)}
		\leq t \leq \sigma \leq \max\{\sigma, \frag_G(e)\}.$$
In all cases, we have that $\frag_R(e)\leq \max\{\sigma, \frag_G(e)\}$ and thus  $R$ must be a $\sigma$-resilient $t$-spanner of $G$.
}

In the special case of unweighted graphs, Theorem~\ref{th:correctweigthed} can be extended to $(\alpha,\beta)$-spanners:
\begin{corollary}\label{coro:alfabeta}
Let $S$ be an $(\alpha, \beta)$-spanner of an unweighted graph $G$, and let 
$R$ be computed by adding to $S$ a backup path for each $\sigma$-fragile edge $e \in S$, with $\sigma \geq \alpha + \beta$. Then $R$ is a $\sigma$-resilient $(\alpha,\beta)$-spanner of $G$.
\end{corollary}
\begin{proof} 
We proceed as in the proof of  Theorem~\ref{th:correctweigthed}, except in showing 
that $\frag_R(e) \leq \max\{\sigma, \frag_G(e)\}$ for each edge $e \in R \setminus S$.
Since $R \setminus e$ is an $(\alpha, \beta)$-spanner of $G$ and $\dist_R(u,v) \geq 1$, we have

		$$\frag_R(e) = \frac{\dist_{R \setminus e}(u,v)}{\dist_R(u,v)} \leq \frac{\alpha\cdot\dist_G(u,v)+\beta}{\dist_R(u,v)}
		\leq \alpha  + \frac{\beta}{\dist_R(u,v)}\leq \alpha + \beta \leq \sigma \leq \max\{\sigma, \frag_G(e)\}.$$
\end{proof}


\commento{
We refer to a shortest path between $u$ and $v$ in $G\setminus e$ as a \emph{backup path} for edge $e$. The correctness of our approach hinges on the fact that the fragility of an edge w.r.t. a subgraph of $G$ cannot increase after adding edges to the subgraph:

\begin{lemma}\label{le:monotone}
Given any two weighted graphs $S=(V,E_S)$ and $S'=(V,E_{S'})$, with $E_{S} \subseteq E_{S'} \subseteq E$, we have $\frag_{S'}(e) \leq \frag_S(e)$ for each edge $e \in E_S$.
\end{lemma}
\begin{proof}
Let us consider an edge $e=(u,v) \in E_S$. Since $S'$ contains all edges of $S$, either $\dist_{S'}(u,v) = \dist_S(u,v)$ or $\dist_{S'}(u,v) < \dist_S(u,v)$. In the former case, since $\dist_{S' \setminus e}(u,v) \leq \dist_{S\setminus e}(u,v)$, we have $\frag_{S'}(e) \leq \frag_S(e)$. 
In the latter case ($\dist_{S'}(u,v) < \dist_S(u,v)$), the shortest path in $S'$ from $u$ to $v$ cannot contain edge $e$, hence we have $\dist_{S' \setminus e}(u,v) = \dist_{S'}(u,v)$, giving $\frag_{S'}(e) = 1 \leq \frag_S(e)$.
\end{proof}

\begin{theorem}\label{th:correctunified}
\nota{PGF: questo \`e per i $t$-spanner di weighted graphs. Se ci convince si u\`o vedere come unificare per gli $(\alpha, \beta)$-spanner di unweighted graphs}Let $S$ be a $t$-spanner of a weighted graph $G$, with $t \geq 1$, and let 
$R$ be computed by adding to $S$ a backup path for each edge $e$ with $\frag_S(e) > \sigma$, with $t \leq \sigma$. Then $R$ is a $\sigma$-resilient $t$-spanner of $G$.
\end{theorem}
\begin{proof}
$R$ is trivially a $t$-spanner of $G$, since it contains a
$t$-spanner $S$. It remains to show that $R$ is $\sigma$-resilient, i.e., 
$\frag_R(e) \leq \max\{\sigma, \frag_G(e)\}$ for each edge $e \in R$.

Lemma~\ref{le:monotone} ensures that if $\frag_R(e) \leq \max\{\sigma, \frag_G(e)\}$, possibly after adding a backup path for $e$, then the same bound still holds for $\frag_Re$ after adding backup paths to $R$ for other edges. For this reason, we must only prove that:
\begin{itemize}
\item for each edge $e \in S$, if $\frag_S(e) > \max\{\sigma, \frag_G(e)\}$ then adding a backup path its fragility becomes at most $\max\{\sigma, \frag_G(e)\}$;
\item for each edge $e \in R \setminus S$, we have $\frag_R(e) \leq \max\{\sigma, \frag_G(e)\}$.
\end{itemize}

\noindent
Let $e = (u,v)$ be any edge in $R$.
\begin{description}
	\item{$e \in S$:}	
	 if $\frag_S(e) > \sigma$, then a backup path $\pi(u,v)$ is added to $R$ and $\dist_{R \setminus e}(u,v) = \dist_{G \setminus e}(u,v)$. Hence, since $\dist_R(u,v) \geq \dist_G(u,v)$,
	 
	 $$\frag_R(e) = \frac{\dist_{R \setminus e}(u,v)}{\dist_R(u,v)} \leq \frac{\dist_{G \setminus e}(u,v)}{\dist_G(u,v)} \leq \frag_G(e) \leq \max\{\sigma, \frag_G(e)\}\ .$$
		 
	\item{$e \in R \setminus S$:}
	in this case $\dist_{R\setminus e}(u,v) \leq \dist_{S}(u,v)$. Since $S$ is a $t$-spanner of $G$, we have $\dist_{R\setminus e}(u,v) \leq \dist_{S}(u,v) \leq t \cdot \dist_G(u,v)$. \nota{PGF: questa parte si pu\`o replicare per gli $(\alpha, \beta)$ unweighted}Thus
		$$\frag_R(e) = \frac{\dist_{R \setminus e}(u,v)}{\dist_R(u,v)} \leq \frac{\dist_S(u,v)}{\dist_R(u,v)}
		\leq \frac{\dist_S(u,v)}{\dist_G(u,v)} \leq t \leq \sigma \leq \max\{\sigma, \frag_G(e)\}\ .$$
\end{description}
In either case, we have that $\frag_R(e)\leq \max\{\sigma, \frag_G(e)\}$ and thus  $R$ is a $\sigma$-resilient $t$-spanner of $G$.
\end{proof}

}



Note that this approach has the additional benefit of producing 
a $\sigma$-resilient spanner $R$ which inherits all monotone increasing properties of the underlying spanner $S$, i.e., all properties that are preserved under edge additions:  for example, if $S$ is fault-tolerant then $R$ is fault-tolerant too.
On the other hand, there can be several choices of backup paths for an edge with high fragility: 
if no particular care is taken in selecting suitable backup paths, we may end up with a resilient spanner of large size. In more detail, 
let $S(n)$ and $T(m,n)$  be respectively 
the number of edges of the initial spanner $S$ and
the time required for its computation, where $m$ and $n$ are respectively the number of edges and vertices in the original graph $G$.   
A trivial implementation of the above algorithm computes  a $\sigma$-resilient spanner $R$ with $O(n\cdot S(n))$ edges in a total of $O(T(m,n)+(m+n \log n)\cdot S(n))$ time.

In the next sections we will show how to refine our algorithm in order improve these bounds, by reducing both the total number of  edges added to the initial spanner $S$ and
the total time required to compute a $\sigma$-resilient spanner $R$ from $S$.

\subsection{Improving the size of $\sigma$-resilient spanners}\label{se:size}

\commento{
Since Theorem~\ref{th:correct} does not depend on how backup paths are actually chosen,
a $\sigma$-resilient $(\alpha,\beta)$-spanner $R$ can be built by adding to an initial spanner $S$ any backup path for each of its high fragility edges. We prove here that we can always obtain a $\sigma$-resilient $(\alpha, \beta)$-spanner, with $\alpha+\beta =3$, contaning
$O(n^{\frac{3}{2}})$ edges in the worst case by showing the following two properties:
}

In this section we show how to refine our algorithm in order to build a $\sigma$-resilient $3$-spanner for a graph with positive edge weights, containing
$O(W \cdot n^{\frac{3}{2}})$ edges in the worst case, where $W$ is the ratio between maximum and minimum edge weight. Our improvement is based on the following two high-level ideas:
\begin{mylist}{(1)}
\litem{(1)} Bound the number of edges with high fragility (Theorem~\ref{th:fewhighshval}).
\litem{(2)} Select carefully the shortest paths to be added as backup paths so that the total number of additional edges required is small (Theorem~\ref{th:smallcycles}).
\end{mylist}

\noindent We start by bounding the number of high fragility edges.

\begin{theorem}\label{th:fewhighshval}
Let $G=(V,E)$ be a graph with positive edge weights, an let $\sigma \geq 1$. Then, the number of edges of $G$ having fragility greater than $\sigma$ is $O(n^{1+1/\floor{(\sigma+1)/2}})$.
\end{theorem}
\begin{proof}
Let $L$ be the subgraph of $G$ containing only the edges with fragility  greater than $\sigma$ in $G$.
If $L$ contains no cycle, then $L$ has at most $(n-1)$ edges and the theorem trivially holds.

Otherwise, let $C$ be a cycle in $L$, and let $\ell$ be the number of edges in $C$.
Let $e=(u,v)$ be a maximum weight edge in $C$, and let $e_1, e_2, \ldots, e_{\ell-1}$ be the remaining edges in $C$. 
Since $L$ contains $e$ and it is a subgraph of $G$, we have by Lemma~\ref{le:monotone}:
\begin{equation}
\sigma < \frag_G(e) \leq \frag_L(e)\ .
\label{eqn:frag}
\end{equation}
We claim that it must be $\dist_{L}(u,v)= w(e)$. Indeed, if $\dist_{L}(u,v)< w(e)$, we would have $\dist_{L\setminus e}(u,v)=\dist_{L}(u,v)$, and thus by Lemma~\ref{le:shortcutratio}
$$\frag_L(e) =  \frac{\dist_{L\setminus e}(u,v)}{\dist_{L}(u,v)}=1$$
which contradicts~(\ref{eqn:frag}).
Since $\dist_{L}(u,v)= w(e)$ and $w(e_i) \leq w(e)$, for $1 \leq i < \ell-1$, we have that
$$\sigma < \frag_G(e) \leq \frag_L(e) =  \frac{\dist_{L\setminus e}(u,v)}{\dist_{L}(u,v)} \leq
\frac{\sum_{i=1}^{\ell-1} w(e_i)}{w(e)}\leq \ell -1\ .$$
This implies that any cycle in $L$ must have more than $(\sigma+1)$ edges.
Let $L'$ be the unweighted version of $L$, i.e., $L'$ has the same vertices and edges as $L$, but its edges are unweighted.
Clearly,
$$\girth(L') > \sigma + 1\ .$$
The theorem now follows directly from a result by Bondy and Simonovits~\cite{bondy}, which states that a graph with girth greater than $\sigma + 1$ contains at most $O(n^{1+1/\floor{(\sigma+1)/2}})$ edges.
\end{proof}

We now show that  the shortest paths to be added as backup paths can be suitably chosen, so that the total number of additional edges is small. In the following, we assume that our input graph $G$ is $2$-edge-connected. 
This is without loss of generality: if $G$ is not $2$-edge-connected, 
then our algorithm can be applied separately to every $2$-edge-connected component of $G$.
Let $e=(u,v)$ be an edge of high fragility in the initial $t$-spanner. Note that, in order to identify a backup path for edge $e$, we  can refer either to a shortest path between $u$ and $v$ in $G\setminus e$ or, equivalently, to a short cycle for $e$ in $G$ (i.e., the short cycle defined by one of the shortest paths in $G\setminus e$ and the edge $e$ itself). In the following, we will identify backup paths by short cycles in $G$ rather than by shortest paths in $G\setminus e$.

\commento
{
\nota{verificare se servono ancora le definizioni seguenti}\nota{PGF: viene usato solo $\Gamma_e$ in un paio di occasioni, si pu\`o definire direttamente l'insieme degli short cycles $\Gamma_e$. Lasciamo quest modifica a dopo la fine di questa passata, algoritmi compresi}
Recall from Section~\ref{se:preliminary}
that we denote by $\Gamma_e(G)$ the set of short cycles for $e$ in graph $G$,
and by $\Pi_{e}(x,y)$ (respectively $\Pi_{\overline{e}}(x,y)$)
the set of shortest paths in ${\cal P}_{e}(x,y)$ (respectively ${\cal P}_{\overline{e}}(x,y)$).
}

An ordered sequence of cycles $C_1, C_2, \ldots, C_q$ is said to be \emph{parsimonious} if the following property holds: for any pair of cycles $C_i$ and $C_j$, with $1\leq i < j \leq q$, if $C_i$ and $C_j$ have two common vertices $x$ and $y$, where $x$ and $y$ split $C_j$ into paths $P'$ and $P''$, then either 
$P' \subseteq \bigcup_{k=1}^{j-1} C_k$ or $P'' \subseteq \bigcup_{k=1}^{j-1} C_k$.
Intuitively speaking, in a parsimonious sequence of cycles, each new cycle $C_j$ reuses as much as possible portions of paths from the union of previous cycles $C_1$, $\ldots$, $C_{j-1}$.
Figure~\ref{fi:parsimonious} illustrates the notion of parsimonious sequence of cycles..
The sequence of cycles $C_1, C_2, C_3, C_4$ shown in the left side of Figure~\ref{fi:parsimonious} is not parsimonious, since each path joining $a$ and $b$ along $C_4$ is not contained into $C_1 \cup C_2 \cup C_3$.
A parsimonious sequence $C_1, C_2, C_3, C'_4$ is shown in the right side of Figure~\ref{fi:parsimonious}, where cycle $C'_4$ is shown in bold. Cycles $C'_4$ and $C_3$ intersect in two points, namely $b$ and $d$, and the path in $C'_4$ joining $b$ to $d$ through $c$ is contained in $C_1 \cup C_2 \cup C_3$. The same holds for the path joining $a$ and $c$, the two common  vertices of $C_1$ and $C'_4$, through $b$, and for the path joining $c$ and $e$, the two common vertices of $C_2$ and $C'_4$, through $d$.

\begin{figure}[t]
\begin{center}
\includegraphics[width=13cm]{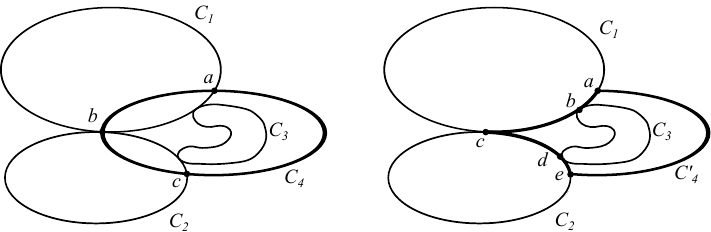}
\end{center}
\caption{The sequence of four cycles $C_1, C_2, C_3, C_4$ on the left is not parsimonious, while the sequence $C_1, C_2, C_3, C'_4$ on the right (where $C'_4$ is represented by the bold line) is parsimonious.}
\protect\label{fi:parsimonious}
\end{figure}
The following theorem bounds the total number of different edges in a parsimonious sequence of cycles.

\begin{theorem}\label{th:smallcycles}
Given a graph $G$, any parsimonious sequence $C_1, C_2, \ldots, C_q$ of cycles in $G$ contains $O(\min\{q \sqrt{n}+n, n \sqrt{q}+q\})$ edges. 
\end{theorem}
\begin{proof}
Let $V_j$ and $E_j$ be respectively the vertex set and the edge set of cycle $C_j$, for $1 \leq j \leq q$. 
We partition each edge set $E_j$ into the following three disjoint sets:
\begin{itemize}
\item \label{oldedges} $E_j^{\mathrm{old}}$ (\emph{old edges}):  
edges already in some $E_i$, $i < j$, i.e., edges in $E_j \cap \round{ \bigcup_{i =1}^{j-1} E_i }$.

\item \label{newnodes} $E_j^{\mathrm{new}}$ (\emph{new edges}): edges with at least one endpoint not contained in $\bigcup_{i =1}^{j-1} V_i$. 

\item \label{oldnodes} $E_j^{\mathrm{cross}}$ (\emph{cross edges}): edges not contained in $E_j^{\mathrm{old}}$ but with both endpoints in $\bigcup_{i =1}^{j-1} V_i$.
\end{itemize}
To prove the theorem, we have to bound the total number of edges in $\bigcup_{j=1}^{q} E_j$.
Note that we only need to bound the total number of new and cross edges (i.e., $|\bigcup_{j =1}^{q} E_j^{\mathrm{new}}|$ and $|\bigcup_{j =1}^{q} E_j^{\mathrm{cross}}|$), since each old edge in $E_j^{\mathrm{old}}$, for any $1\leq j \leq q$, is already accounted for either as a new edge or as a cross edge in some cycle $C_i$,  $i < j$.
%
Furthermore, each new edge in $E_j^{\mathrm{new}}$ can be amortized against a newly discovered vertex (i.e., a vertex $v\not\in\bigcup_{i =1}^{j-1} V_i$), and in cycle $C_j$ there can be at most two such edges which are incident to the same newly discovered vertex $v$: this implies that
$|\bigcup_{j =1}^{q} E_j^{\mathrm{new}}| \leq 2 \cdot n.$

To complete the proof, it remains to bound the total number of cross edges in the sequence.
We do this as follows. 
For each cycle $C_j$ we choose arbitrarily an orientation $\overrightarrow{C_j}$,  in one of the two possible directions, and direct its edges accordingly. Given a directed edge $e = (x,y)$, we denote its endpoint $x$ as $\tail(e)$ and its endpoint $y$ as $\head(e)$.
We build a bipartite graph $\cal{B}$ in which one vertex class represents the $n$ vertices $v_1, v_2, \ldots, v_n$ in $G$, and the other vertex class represents the $q$ directed cycles $\overrightarrow{C_1}, \overrightarrow{C_2}, ..., \overrightarrow{C_{q}}$.
There is an edge in $\cal{B}$ joining vertex $v$ and cycle $C_j$  if and only if $v$ is the tail of an edge in $E_j^{\mathrm{cross}}$ (hence, the degree of $C_j$ in $\cal{B}$ is equal to the size of $E_j^{\mathrm{cross}}$). 
Note that there is a one-to-one correspondence between edges in $\cal{B}$ and  $\bigcup_{j =1}^{q} E_j^{\mathrm{cross}}$. Thus, 
 to prove the theorem it suffices to show that the number of edges in  $\cal{B}$  is $O(\min\{q \sqrt{n}+n, n \sqrt{q}+q\})$. 

We claim that there cannot exist two vertices $x$ and $y$ that are tails of two pairs of directed edges in $E_i^{\mathrm{cross}}$ and $E_j^{\mathrm{cross}}$ (see Figure~\ref{fi:cycles}). Indeed, assuming $i < j$, the fact that the sequence of cycles is parsimonious implies that one of the two portions of $C_j$ defined by $x$ and $y$ must contain only edges in $E_j^{\mathrm{old}}$.
%
\begin{figure}[t]
\begin{center}
\includegraphics{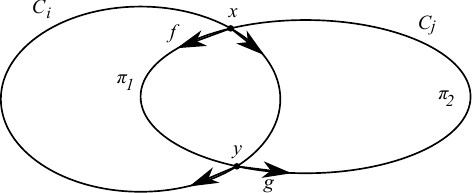}
\end{center}
\caption{On the proof of Theorem~\ref{th:smallcycles}. If $i < j$, then either edge $f$ or $g$ in $C_j$ is not in $E_j^{\mathrm{cross}}$. In fact, either $\pi_1$ or $\pi_2$ should be included in $\bigcup_{k=1}^{j-1} C_k$.}\protect\label{fi:cycles}
\end{figure}
%
The previous claim implies that the bipartite graph $\cal{B}$ does not contain  $K_{2,2}$ as a subgraph.
Determining the maximum number of edges in $\cal{B}$ is a special case of Zarankiewicz's problem~\cite{Z51}.
This problem has been solved by K\H{o}v\'ari, S\'os, Tur\'an \cite{Kovari} (see also \cite{matousek}, p.\ 65), who proved that any bipartite graph $G$ with vertex classes of size $m$ and $n$ containing no subgraph $K_{r,s}$, with the $r$ vertices in the class of size $m$ and the $s$ vertices in the class of size $n$, has
$O\left(\min\left\{m n^{1-1/r}+n, m^{1-1/s} n +m \right\}\right)$
edges, where the constant of proportionality depends on $r$ and $s$.
Since in our case the bipartite graph $\cal B$ has vertex classes of size $n$ and $q$, and $r=s=2$,  it follows that $\cal B$  contains $O(\min\{q \sqrt{n}+n, n \sqrt{q}+q\})$ edges, which yields the theorem.
\commento{
In summary, the total number of edges in the graph $\bigcup_{i =1}^q C_i$ is bounded by
$$ \left|\bigcup_{i =1}^{q}  \left(E_i^{\mathrm{old}}\cup E_i^{\mathrm{new}}\cup E_i^{\mathrm{cross}}\right)\right| \leq 2  n + O\left(\min \left\{q \sqrt{n}+n, n \sqrt{q}+q\right\}\right)$$
 and thus the theorem holds.
 }
\end{proof}

We observe that Theorem~\ref{th:smallcycles} can be adapted to provide an alternative proof of a result of Coppersmith and Elkin~\cite{Coppersmith} on distance preservers. Given a graph $G$ and $p$ pairs of vertices $\braces{(v_1,w_1), (v_2,w_2), \ldots, (v_p,w_p)}$, a pairwise distance preserver is a subgraph $S$ of $G$ such that $\dist_S(v_i,w_i) =  \dist_G(v_i,w_i)$, for $1 \leq i \leq p$. In particular, Coppersmith and Elkin~\cite{Coppersmith} showed  that it is always possible to compute a pairwise distance preserver containing $O(\min\{p \sqrt{n}+n, n \sqrt{p}+p\})$ edges. 


\subsection{Efficiently computing a parsimonious sequence of short cycles}\label{se:computing}

To compute a $\sigma$-resilient $t$-spanner of graph $G$ we start from a $t$-spanner $S$ of $G$ and add to $S$ a parsimonious sequence of short cycles, in order to apply Theorem~\ref{th:smallcycles}.
Let $E_S(\sigma)$ be the set of $\sigma$-fragile edges in $S$ (i.e., edges $e$ with $\frag_S(e) > \max\braces{\sigma, \frag_G(e)}$).
For each edge $e \in E_S(\sigma)$, we find a short cycle for edge $e$ in graph $G$ and add that cycle to $S$. 
To guarantee the parsimonious property, we 
choose in a greedy fashion short cycles that reuse paths contained in the union of previously added cycles.
We first describe how to find $\sigma$-fragile edges and next show how to compute a short cycle for each such edge.

The computation of $\sigma$-fragile edges can be accomplished in $O(m n + n^2 \log n)$ worst-case time by using an algorithm by Brandes for computing shortcut values (described in~\cite{Dagstuhl4}, Section 4.2.2). 
We recall here that, given an edge $e$ in graph $G$,  
 the \emph{shortcut value} of $e$, denoted by $\shval_G(e)$, is defined as the maximum distance increase between any two vertices after the deletion of $e$:
$$\shval_G(e) = \max_{x,y \in V}\braces{\dist_{G\setminus e}(x,y) - \dist_{G}(x,y)}\ .$$
Brandes' algorithm performs $n$ Dijkstra-like visits,  each time considering a different vertex as root; when starting a new visit from a root $r$, the algorithm computes the shortcut value of each edge incident to $r$.
It can be seen (\cite{Dagstuhl3}, Section 3.6.3) that the distance increase after the deletion of edge $e=(u,v)$ is maximum for the endpoints of $e$, i.e.,
$$\shval_G(e) = \dist_{G\setminus e}(u,v) - \dist_{G}(u,v)\ .$$
As a consequence of Lemma~\ref{le:shortcutratio}, the fragility of an edge $e=(u,v)$ can be expressed as
$$\frag_{G}(e) = \frac{\shval_G(e)}{\dist_G(u,v)} + 1\ ,$$
and thus it can be easily determined once the shortcut value of the same edge $e$ is known.

A trivial algorithm for computing a parsimonious sequence of short cycles can be obtained as follows. For each edge $(u,v) \in E_{S}(\sigma)$, we compute a shortest path from $u$ to $v$ in $G\setminus (u,v)$: when comparing paths with the same weight, we select the path containing the smallest number of new edges (ties can be broken arbitrarily). This can be done by means of a slight modification of Dijkstra's algorithm and it requires a total of $O\round{|E_{S}(\sigma)| \cdot (m + n \log n)}$ worst-case time, which is  
$O\round{n^{1+1/\floor{(\sigma+1)/2}} \cdot (m + n \log n)}$ by Theorem~\ref{th:fewhighshval}.

We next show how to improve this time to $O\round{m n + n^2 \log n}$ in the worst case.
Consider Algorithm \ResilientSpanner\ illustrated in Figure~\ref{fi:algobackup}. For each vertex $r$ in $G$, we first compute all the $\sigma$-fragile edges incident to $r$. Next, we augment the current spanner with 
one short cycle for each $\sigma$-fragile edge. Throughout, we will guarantee the important property that the total sequence of short cycles added during all the iterations of Algorithm \ResilientSpanner\ is parsimonious. 
This will be accomplished by  Algorithm \ParsimoniousCycles, invoked in Line~\ref{row:add}, which is the crux of the method. 
\commento{
\begin{figure}[ht]
\noindent\hrulefill%
\begin{prog}{pr:resilientspanner}
\key{Algorithm} ResilientSpanner($G$, $S$, $\sigma$, $E_S(\sigma)$)\\
\key{input:}\\
\>  graph $G$\\
\>  a $t$-spanner $S$ of $G$\\
\>  a fragility threshold $\sigma$, with $\sigma \geq t$\\
\>  the set of $\sigma$-fragile edges $E_S(\sigma)$\\
\key{output:}\\
\>  a $\sigma$-resilient $t$-spanner $R$ of $G$, with $R \supseteq S$ \\[1mm]

\vspace{0.3cm}\\
\N  \key{let}$R = S$\\
\N  \key{for each}vertex $r$ in $G$\\
\N  \>  \key{let}$E_r$ be the set of $\sigma$-fragile edge in $R$ incident to $r$\\
\NL{row:cycle}  \>  \key{for each} edge $e=(r,x) \in E_r$,\\
\NL{row:following}  \>  \>  add to $R$ a short cycle $c_x \in \Gamma_{e}(G)$ such that\\
\>  \>  \>  the sequence of added cycles is parsimonious\\
\end{prog}

\caption{Algorithm ResilientSpanner.}
\protect\label{fi:algobackup}
\noindent\hrulefill%
\end{figure}
}

\begin{figure}[ht]
\noindent\hrulefill%
\begin{prog}{pr:resilientspanner}
\key{Algorithm} ResilientSpanner($G$, $S$, $\sigma$)\\
\key{input:}\\
\>  graph $G$\\
\>  a $t$-spanner $S$ of $G$\\
\>  a fragility threshold $\sigma$, with $\sigma \geq t$\\
\key{output:}\\
\>  a $\sigma$-resilient $t$-spanner $R$ of $G$, with $R \supseteq S$ \\[1mm]

\vspace{0.3cm}\\
\N  \key{let}$S' = S$\\
\NL{row:loop}  \key{for each}vertex $r$ in $G$\\
\N  \>  \key{let}$E_r$ be the set of $\sigma$-fragile edges in $S'$ incident to $r$\\
\NL{row:add}    \>  $S' =  S' \cup \mathrm{ParsimoniousCycles}(G, S', r, E_r)$\\
\N  \key{return}$R = S'$\\
\end{prog}

\caption{Algorithm \ResilientSpanner.}
\protect\label{fi:algobackup}
\noindent\hrulefill%
\end{figure}

Before describing the Algorithm \ParsimoniousCycles\ in detail, we need few preliminary definitions. 
Let $S'$ be the spanner at a generic iteration of the loop of Algorithm \ResilientSpanner. Given a vertex $r$, we define a shortest path tree $T_r$ of $G$ rooted at $r$ to be \emph{parsimonious} if the following condition holds:
\begin{quote}
for each vertex $x\not= r$, the path from $r$ to $x$ in $T_r$ is a shortest path 
between $r$ and $x$ in $G$ with the smallest number of edges in $G\setminus S'$.
\end{quote}
It can be seen that, if $T_r$ is parsimonious, then for each pair of vertices $x',x''$ such that $x'$ is an ancestor of $x''$ in $T_r$, also the path from $x'$ to $x''$ in $T_r$ is a shortest path in $G$ with the smallest number of edges in $G\setminus S'$.
Note that a parsimonious shortest path tree $T_r$ can be computed using a slight modification of the shortest path algorithm by Dijkstra: whenever two or more alternative paths with the same weight reach a vertex, we select the path with the smaller number of edges in $G \setminus S'$ (ties can be broken arbitrarily).
\commento{
For each vertex $x \not= r$ we define:
\begin{itemize}
\item $\delta_r(x)$: the distance from $r$ to $x$;
\item $\tau_r(x)$: the vertex such that $(r, \tau_r(x))$ is the first edge in the shortest path from $r$ to $x$ in $T_r$;
\item $p_{\delta,r}(x)$: the vertex preceding $x$ in the shortest path from $r$ to $x$ in $T_r$;
\item $k_{\delta,r}(x)$: the number of edges from $G\setminus R$ in the shortest path in $T_r$ from $r$ to $x$.
\end{itemize} 
Moreover, given a vertex $v$ adjacent to $r$, let $t_r(v)$ be the subtree of $T_r$ rooted at $v$.
Let $x$ be a vertex in $t_r(v)$, and let $\pi_x$ be a shortest path from $r$ to $x$ in $G\setminus (r,v)$.
We say that $\pi_x$ is a \emph{best backup path} for $x$ if it contains the smallest number of edges from $G \setminus R$.
}
Moreover, let $v$ be a vertex adjacent to $r$ in $T_r$ and let  $\Pi_r(v)$ be the set of all shortest paths from $r$ to $v$ in $G\setminus (r,v)$.
We denote a 
path $\pi_v \in \Pi_r(v)$ having the smallest number of edges from $G \setminus S'$ as a \emph{best backup path} for edge $(r,v)$.
By definition, a best backup path for $(r,v)$ does not contain edge $(r,v)$ and thus it must contain at least one edge in $G\setminus T_r$. 
The following lemma shows that there must exist a best backup path for $(r,v)$ containing exactly one edge of $G\setminus T_r$. 

%

\begin{lemma}\label{le:bestbackup}
Let $v$ be a vertex adjacent to $r$ in $T_r$. There is at least one best backup path for $(r,v)$ that contains exactly one edge of $G\setminus T_r$.
\end{lemma}
\begin{proof}
Let $\pi_v$ be a best backup path for $(r,v)$, and let $T_r(v)$ be the subtree of $T_r$ rooted at $v$.
\begin{figure}[t]
\begin{center}
\includegraphics[width=9cm]{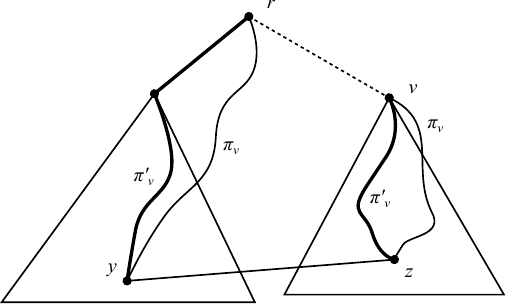}
\end{center}
\caption{On the proof of Lemma~\ref{le:bestbackup}. Solid paths are contained in $T_r$.}\protect\label{fi:bestbackup}
\end{figure}
Walk along the path $\pi_v$ starting from the root $r$, and let $z$ be the first vertex encountered in the subtree $T_r(v)$
 (see Figure~\ref{fi:bestbackup}). Let  $y$ be the vertex immediately before $z$ in $\pi_v$; clearly $y$ is not in $T_r(v)$ and  $(y,z)$ is an edge of $G \setminus T_r$.

Let $\pi(r,y)$ be the path between $r$ and $y$ in $T_r$.
Since $T_r$ is a parsimonious shortest path tree, $\pi(r,y)$ 
is a shortest path between $r$ and $y$ in $G$ with the smallest number of edges in $G\setminus S'$.
The same argument holds for the path $\pi(z,v)$ between $z$ and $v$ in $T_r$.
Hence, the path $\pi'_v=\pi(r,y)\cdot (y,z)\cdot \pi(z,v)$ obtained by concatenating the path between $r$ and $y$ in $T_r$, edge $(y,z)$, and the path between $z$ and $v$ in $T_r$ 
must be a best backup path for $(r,v)$, and it contains only one edge, i.e., $(y,z)$, from $G \setminus T_r$.
\end{proof}

We denote by \emph{single-cross backup paths} the best backup paths
that contain exactly one edge from $G\setminus T_r$ (as in Lemma~\ref{le:bestbackup}). Single-cross backup paths will be a crucial ingredient for finding efficiently a parsimonious sequence of short cycles.
\commento{For each vertex $x$ we define:
\begin{itemize}
\item $\vartheta_r(x)$: the length of the current best backup path from $r$ to $x$;
\item $p_{\vartheta,r}(x)$: the vertex preceding $x$ in the current best backup path from $r$ to $x$;
\item $k_{\vartheta,r}(x)$: the number of edges from $G\setminus R$ in the current best backup path from $r$ to $x$.
\end{itemize}
Values of $\vartheta_r(x)$ and $k_{\vartheta,r}(x)$ are computed by Algorithm~\ref{pr:backuppaths}.
}
Note that, given a root $r$ and a vertex $v$ adjacent to $r$ in $T_r$, combining the edge $(r,v)$ with a single-cross backup path for $(r,v)$ yields a short cycle for edge $(r,v)$.

\commento{
\begin{figure}[ht]
\noindent\hrulefill%
\begin{prog}{pr:backuppaths}
\key{Algorithm} SetParentsForBackupPaths($G$, $R$, $r$, $F_r$, $T_r$)\\
\key{input:}\\
\>  graph $G$,\\
\>  $(\alpha, \beta)$-spanner $R$ of $G$,\\
\>  the root vertex $r$,\\
\>  the set $F_r$ of vertices adjacent to $r$ computed by Algorithm~\ref{fi:algobackup},\\
\>  the shortest pathtree $T_r$ fulfilling Property~\ref{pr:fewnewedges}\\
\key{output:}\\
\>  for each vertex $v \in F_r$, sets parents for each vertex in a best backup path for $(r,v)$\\
\\[1mm]
\N  initialize the priority queue: $Q = \emptyset$\\
\N  \key{for each}$v \not= r \in V$\\
\N  \>  \key{insert}$v$ into $Q$ with priority $(\delta_r(v),k_{\delta,r}(v),\delta)$ into $Q$\\

\N  \key{while}$Q \not= \emptyset$\\
\N  \>  \key{pop}the best priority vertex $x$ from $Q$, let $(d, k, t)$ be its priority\\
\N  \>  \key{if}$t = \delta$ \rem{$x$ is a $\delta$-vertex}\\
\N  \>  \>  \key{for each}$G$-neighbor $y$ of $x$ with $\tau_r(y) \not= \tau_r(x)$\\
\N  \>  \>  \>  $\vartheta' = \delta_r(x) + w(x,y)$\\
\N  \>  \>  \> \key{if}$(x,y) \in R$\\
\N  \>  \>  \>  \> $k' = k_{\delta,r}(x)$\\
\N  \>  \>  \>  \key{else}\\
\N  \>  \>  \>  \> $k' = k_{\delta,r}(x) + 1$\\
\N  \>  \>  \>  \key{if}($y$ does not have $\vartheta_r$) or $\vartheta_r(y) > \vartheta'$ or ($\vartheta_r(y) = \vartheta'$ and $k_{\vartheta,r}(y) > k'$)\\
\N  \>  \>  \>  \>  $\vartheta_r(y) = \vartheta'$;\ \ \  $p_{\vartheta,r}(y) = x$;\ \ \  $k_{\vartheta,r}(y) = k'$\\
\N  \>  \>  \>  \>  \key{insert}$y$ (or decrease its priority if already present with higher priority)  with priority $(\vartheta',k',\vartheta)$ into $Q$\\
\N  \>  \key{else} \rem{$x$ is a $\vartheta$-vertex}\\
\N  \>  \>  \key{let} $y = p_{\delta,r}(x)$ \rem{we only look for single cross paths}\\
\N  \>  \>  \key{if}$y \not= r$\\
\N  \>  \>  \>  $\vartheta' = \vartheta_r(x) + w(x,y)$\\
\N  \>  \>  \> \key{if}$(x,y) \in R$\\
\N  \>  \>  \>  \> $k' = k_{\vartheta,r}(x)$\\
\N  \>  \>  \>  \key{else}\\
\N  \>  \>  \>  \> $k' = k_{\vartheta,r}(x) + 1$\\
\N  \>  \>  \>  \key{if}($y$ does not have $\vartheta_r$) or ($\vartheta_r(y) > \vartheta'$) or ($\vartheta_r(y) = \vartheta'$ and $k_{\vartheta,r}(y) > k'$)\\
\N  \>  \>  \>  \>  $\vartheta_r(y) = \vartheta'$;\ \ \  $p_{\vartheta,r}(y) = x$;\ \ \  $k_{\vartheta, r}(y) = k'$\\
\N  \>  \>  \>  \>  \key{insert}$y$ (or decrease its priority if already present with higher priority) with priority $(\vartheta',k',\vartheta)$ into $Q$\\
\end{prog}

\caption{Algorithm SetParentsForBackupPaths.}
\protect\label{fi:mainalgo}
\noindent\hrulefill%
\end{figure}

\begin{figure}[ht]
\noindent\hrulefill%
\begin{prog}{pr:backuppaths}
\key{Algorithm} AddBackupPaths($G$, $R$, $r$, $F_r$)\\
\key{input:}\\
\>  graph $G$,\\
\>  $(\alpha, \beta)$-spanner $R$ of $G$,\\
\>  vertex $r$\\
\>  the set $F_r$ of vertices adjacent to $r$ computed by Algorithm~\ref{fi:algobackup},\\
\key{output:}\\
\>  adds a best backup path for each vertex $x \in F_r$\\
\vspace{0.3cm}\\

\N  \key{for each}neighbor $x$ of $r$ in $F_r$\\
\N  \>  $v = x$\\
\N  \>  \key{do} \rem{follow $\vartheta$ parents until we get out of subtree $\tau_r(x)$}\\
\N  \>  \>  \key{add}edge $(p_{\vartheta,r}(v), v)$ to $R$\\
\N  \>  \>  $v = p_{\vartheta,r}(v)$\\
\N  \>  \key{while}$\tau_r(v) = \tau_r(x)$;\\
\N  \>  \key{do} \rem{follow $\delta$ parents until we get to root $r$}\\
\N  \>  \>  \key{add}edge $(p_{\delta,r}(v), v)$ to $R$\\
\N  \>  \>  $v = p_{\delta,r}(v)$\\
\N  \>  \key{while}$v \not= r$;\\
\end{prog}

\caption{Algorithm AddBackupPaths.}
\protect\label{fi:addalgobackup}
\noindent\hrulefill%
\end{figure}
}

While computing a parsimonious shortest path tree $T_r$ in $O(m+n\log n)$ time, in the same bound we can compute and store in each vertex $x$ the following information:
\begin{itemize}
\item $\delta(x)$: the distance from $r$ to $x$;
\item $k(x)$: the number of edges from $G\setminus S'$ in the (shortest) path from $r$ to $x$ in $T_r$;
\item $\apex(x)$: the vertex such that $(r, \apex(x))$ is the first edge in the (shortest) path from $r$ to $x$ in $T_r$;
\item $p(x)$: the vertex immediately before $x$ in the (shortest) path from $r$ to $x$ in $T_r$.
\end{itemize} 

We are now ready to complete the low-level details of Algorithm \ResilientSpanner\ of Figure~\ref{fi:algobackup}
by showing how to implement 
Algorithm  \ParsimoniousCycles, whose pseudo-code is illustrated in Figure~\ref{fi:mainalgo}. We first sketch the main ideas behind the algorithm. We are given the current spanner $S'$, a vertex $r$, and the set $E_r$ of $\sigma$-fragile edges incident to $r$. 
In the following, we denote  
by $\pi_v$ a single-cross backup path for $(r,v)$.
The objective of Algorithm  \ParsimoniousCycles\ is to  compute $\pi_v$ for each vertex $v$ such that $(r,v)\in E_r$.
By Lemma~\ref{le:bestbackup}, $\pi_v$ must be of the form
$\pi_v=\pi(r,y)\cdot (y,z)\cdot\pi(z,v)$, where $\pi(u,w)$ denotes the unique path between vertices $u$ and $w$ in $T_r$, and 
$(y,z)$ is an edge in $G\setminus T_r$ with  $\apex(y) \not= v$ and $\apex(z) = v$.
To compute $\pi_v$, 
 it thus suffices to identify such an edge $(y,z)$ in $G\setminus T_r$,  so that the 
following two properties hold:
\begin{mylist}{(1) }
\litem{(1)} The path $\pi_v=\pi(r,y)\cdot (y,z)\cdot\pi(z,v)$ is a shortest path in $G\setminus (r,v)$. Note that the weight of the path $\pi(r,y)\cdot (y,z)\cdot\pi(z,v)$ can be  computed in constant time as $(\delta(y)+w(y,z)+(\delta(z)-w(r,v)))$.
\litem{(2)} Among all shortest paths between $r$ and $v$ in $G\setminus (r,v)$, $\pi_v$ has the smallest number of edges from $G \setminus S'$. Note that the number of edges from $G \setminus S'$ in $\pi(r,y)\cdot (y,z)\cdot\pi(z,v)$ can be computed in constant time as $(k(y)+k(z))$ if $(y,z)\in S'$, and as $(k(y)+k(z)+1)$ otherwise.
\end{mylist}

Having this in mind, Algorithm \ParsimoniousCycles\ works
as follows. It stores in 
 $\best(v)$ the currently best single-cross backup path computed for each edge $(r,v)\in E_r$, where $\best(v)$ is initialized in Lines 2--3 of Figure~\ref{fi:mainalgo}.
 Next, the algorithm
scans all edges $(y,z)$ in $G\setminus T_r$
such that $(r,\apex(z))\in E_r$, as illustrated in Lines 4--5. Note that $\apex(y)\neq \apex(z)$ is further checked on Line 5, since by Lemma~\ref{le:bestbackup} when $\apex(y)=\apex(z)$ then the edge $(y,z)$ cannot be in a single-cross backup path. Otherwise, the edge $(y,z)$ can potentially induce a single-cross backup path $\gamma=\pi(r,y)\cdot (y,z)\cdot\pi(z,\apex(z))$ for edge $(r,\apex(z))$: if $\gamma$ improves the previously known value for $\best(\apex(z))$, then $\best(\apex(z))$ gets updated in Line 8. 
At the end of the loop in Lines 4--8, the algorithm has computed $\pi_v=\best(v)$ for all edges $(r,v)\in E_r$. On Lines 9--11 it returns the corresponding short cycles.

\commento{
Due to Lemma~\ref{le:bestbackup}, we can find a best backup path for edge $(r,v) \in E_r$, by looking at the ``best possible'' path composed by a path in $T_r$ from $v$ to some vertex $z$, a path in $T_r$ from $r$ to a vertex $y$, and edge $(y,z)$, with $\apex(z) = v$ and $\apex(y) \not= v$, where ``best possible'' means, among all paths having minimum length, the one that contains the minimum number of edges in $G \setminus S'$. Given any two vertices $x$ and $y$, we denote by $\pi(x,y)$ the unique path in $T_r$ between $x$ and $y$.
}

\begin{figure}[ht]
\noindent\hrulefill%
\begin{prog}{pr:backuppaths}
\key{Algorithm} ParsimoniousCycles($G$, $S'$, $r$, $E_r$)\\
\key{input:}\\
\>  graph $G$,\\
\>  $t$-spanner $S'$ of $G$,\\
\>  the root vertex $r$,\\
\>  the set $E_r$ of $\sigma$-fragile edges in $S$ incident to $r$\\

\key{output:}\\
\>  a set of short cycles, one for each edge in $E_r$\\
\\[1mm]
\NL{li:tree}  compute a parsimonious shortest path tree $T_r$\\
\NL{li:firstloop}  \key{for each}$v$ such that $(r,v) \in E_r$\\
\N  \>  \key{set}the current backup path $\best(v) = \emptyset$\\

\NL{li:nodeloop}  \key{for each}$y \in V \setminus r$\\
\NL{li:edgeloop}  \>  \key{for each}edge $(y,z) \in G \setminus T_r$ with $\apex(y) \not= \apex(z)$ and such that $(r,\apex(z))\in E_r$\\
\N  \>  \>\key{let}$\gamma = \pi(r,y) \cdot (y,z) \cdot \pi(z,\apex(z))$ \rem{where $\cdot$ denotes path concatenation}\\
\NL{li:isbetterpath}  \>  \>  \key{if}path $\gamma$ improves over $\best(\apex(z))$\\
\N  \>  \>  \> $\best(\apex(z)) = \gamma$\\
\N \key{let}$C_r = \emptyset$\\
\NL{li:lastloop}  \key{for each}$v$ such that $(r,v) \in E_r$\\
\N  \>  $C_r = C_r \cup \braces{\best(v) \cup (r,v)}$\\
\N \key{return}$C_r$\\
\end{prog}

\caption{Algorithm \ParsimoniousCycles.}
\protect\label{fi:mainalgo}
\noindent\hrulefill%
\end{figure}

\commento{
\begin{figure}[ht]
\noindent\hrulefill%
\begin{prog}{pr:backuppaths}
\key{Algorithm} FindBackupPaths($G$, $R$, $r$, $F_r$, $T_r$)\\
\key{input:}\\
\>  graph $G$,\\
\>  $(\alpha, \beta)$-spanner $R$ of $G$,\\
\>  the root vertex $r$,\\
\>  the set $F_r$ of vertices adjacent to $r$ computed by Algorithm~\ref{fi:algobackup},\\
\>  the shortest pathtree $T_r$ fulfilling Property~\ref{pr:fewnewedges}\\
\key{output:}\\
\>  for each vertex $v \in F_r$, sets parents for each vertex in a best backup path for $v$\\
\\[1mm]
\N  initialize the priority queue: $Q = \emptyset$\\
\N  \key{for each}$v \not= r \in V$\\
\N  \>  \key{set}the current backup path $\Pi(v) = \emptyset$\\
\N  \>  \key{insert}$v$ into $Q$ with priority $\delta_r(v)$\\

\N  \key{while}$Q \not= \emptyset$\\
\N  \>  \key{pop}the best priority vertex $z$ from $Q$\\
\N  \>  \key{for each}edge $(z,y) \in G$ with $\tau_r(y) \not= \tau_r(z)$\\
\NL{li:isbetterpath}  \>  \>  \key{if}path $\pi_{T_r}(r,y,z,v)$ improves over $\Pi(v)$\\
\N  \>  \>  \> \key{set}$\Pi(v) = \pi_{T_r}(r,y,z,v)$\\
\end{prog}

\caption{Algorithm FindBackupPaths.}
\protect\label{fi:mainalgo}
\noindent\hrulefill%
\end{figure}

}

\commento{
The length and the number of edges in $G \setminus S'$ of path $\gamma$ to be considered at Line~\ref{li:isbetterpath} can be easily computed in constant time using values $\delta(y)$, $k(y)$, $\delta(z)$, $k(z)$ together with the weights of edges $(y,z)$ and $(r, \apex(z))$, knowing whether $(y,z) \in S'$ and/or $(r, \apex(z)) \in S'$.
}

The next theorem shows that the set of short cycles computed in Algorithm \ResilientSpanner\ by the $n$ calls to 
\ParsimoniousCycles\ yields a parsimonious sequence (of short cycles).

\begin{theorem}\label{th:isparsimonious}
There exists an ordering of the short cycles computed by Algorithm \ResilientSpanner\ so that the resulting sequence is parsimonious.
\end{theorem}
\begin{proof}
Let $r_1, r_2, \ldots, r_n$ be the order in which the vertices in $G$ are considered as roots by Algorithm \ResilientSpanner\ (on Line 2 in Figure~\ref{fi:algobackup}). For each root $r_i$, with $1 \leq i \leq n$, 
let $E_{r_i}$ be the set of $\sigma$-fragile edges incident to $r_i$ in the original spanner $S$. Note that Algorithm
\ResilientSpanner\ computes
 a short cycle $C_{i,j}$  
 for each edge $(r_i,v_{i,j}) \in E_{r_i}$, with $1 \leq j \leq |E_{r_i}|$. Moreover, the short cycle 
 $C_{i,j}$
consists of a single-cross backup path for edge $(r,v_{i,j})$ and the edge $(r_i,v_{i,j})$ itself. To prove the theorem, we show that the sequence of short cycles, $C_{i,j}$, for $1 \leq i \leq n$ and $1 \leq j \leq |E_{r_i}|$, sorted lexicographically by increasing $(i,j)$, is parsimonious.

Let $C_{i,j}$ and $C_{i',j'}$ be any two short cycles in this sequence that share two vertices, with the pair $(i',j')$ preceding pair $(i,j)$ lexicographically. We now distinguish two cases, depending on whether  ($i' < i$) or ($i' = i$ and $j'<j$).

\begin{description}
\item{Case $i' < i$:} when cycle $C_{i,j}$ is computed by Algorithm \ParsimoniousCycles, all the edges in $C_{i',j'}$ are already in the current spanner $S'$.
Recall that $C_{i,j}$ is a short cycle for edge $(r_i,v_{i,j})$, and let $(y,z)$ be the unique edge in $C_{i,j}$ that does not belong to $T_{r_i}$. By Lemma~\ref{le:bestbackup},
the short cycle $C_{i,j}$ consists of the edge $(y,z)$ plus two subpaths in $T_{r_i}$: a path $\pi_1$ from $r_i$ to $y$ and a path $\pi_2$ consisting of  edge  $(r_i,v_{i,j})$ followed by a path from $v_{i,j}$ to $z$ (see Figure~\ref{fi:parsimoniousDiffRoot}).  
\begin{figure}[t]
\begin{center}
\includegraphics[width=11cm]{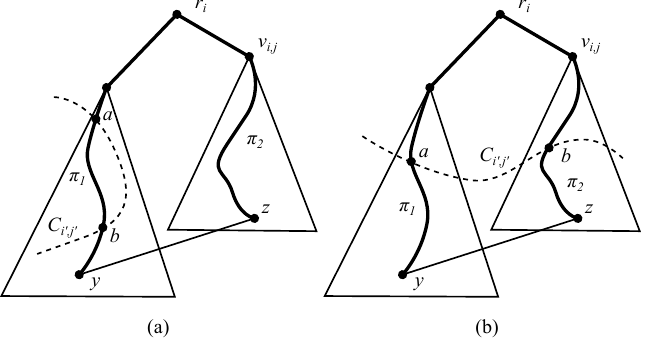}
\end{center}
\caption{Proof of Theorem~\ref{th:isparsimonious}. Case $i' < i$.}\protect\label{fi:parsimoniousDiffRoot}
\end{figure}
If $C_{i,j}$ and $C_{i',j'}$ share two vertices $a$ and $b$, two cases may occur: either $a$ and $b$ are in the same subpath
or they are in two different subpaths of $C_{i,j}$. Furthermore, 
since both $C_{i,j}$ and $C_{i',j'}$ are short cycles, the portion of $C_{i,j}$ from $a$ to $b$ must have the same weight as the portion of $C_{i',j'}$ from $a$ to $b$.
We now consider the two cases separately.


In the first case, assume without loss of generality that both $a$ and $b$ are in $\pi_1$ (see Figure~\ref{fi:parsimoniousDiffRoot} (a)). 
Note that at this point  the portion of the cycle $C_{i',j'}$ between $a$ and $b$ is already contained in $S'$.
Since $T_{r_i}$ is a parsimonious shortest path tree, the subpath $\pi_1$ in $T_{r_i}$ must contain the minimum number of edges in $G \setminus S'$, which implies that 
also the portion of the subpath $\pi_1$ between $a$ and $b$ must be contained in $S'$ (i.e., cycles $C_{i',j'}$ and $C_{i,j}$ satisfy the condition for being part of a parsimonious sequence).

In the second case, assume without loss of generality that $a$ is in $\pi_1$ and $b$ is in $\pi_2$ (see  Figure~\ref{fi:parsimoniousDiffRoot} (b)). Once again, at this point  the portion of the cycle $C_{i',j'}$ between $a$ and $b$ is already contained in $S'$. When $C_{i,j}$ is computed by Algorithm \ParsimoniousCycles, the portion of $C_{i,j}$ between $a$ and $b$ passing through edge $(y,z)$ must be contained in $S'$ (otherwise the portion of $C_{i',j'}$ between $a$ to $b$ would have produced a cycle with the same weight and fewer edges of  $G\setminus S'$).
Once again, cycles $C_{i',j'}$ and $C_{i,j}$ satisfy the condition for being part of a parsimonious sequence.

\commento{An ordered sequence of cycles $C_1, C_2, \ldots, C_q$ is said to be \emph{parsimonious} if the following property holds: for any pair of cycles $C_i$ and $C_j$, with $1\leq i < j \leq q$, if $C_i$ and $C_j$ have two common vertices $x$ and $y$, where $x$ and $y$ split $C_j$ into paths $P'$ and $P''$, then either 
$P' \subseteq \bigcup_{k=1}^{j-1} C_k$ or $P'' \subseteq \bigcup_{k=1}^{j-1} C_k$.
Intuitively speaking, in a parsimonious sequence of cycles, each new cycle $C_j$ reuses as much as possible portions of paths from the union of previous cycles $C_1$, $\ldots$, $C_{j-1}$.
}

\item{Case $i' = i$ and $j'<j$:} cycles $C_{i,j}$ and $C_{i,j'}$ are computed during the same call of Algorithm \ParsimoniousCycles\ from a root $r_i$. Since Algorithm \ParsimoniousCycles\ computes single-cross backup paths, each short cycle produced passes through the root $r_i$ and traverses exactly two subtrees of $r_i$. 
We observe that if $C_{i,j}$ and $C_{i,j'}$ do not share any subtree, then they can intersect only at the root $r_i$.
Hence, only the following two cases are possible for short cycles $C_{i,j}$ and $C_{i',j'}$ intersecting at two vertices:
\begin{itemize}
\item{$C_{i,j}$ and $C_{i,j'}$ are contained in the same two subtrees:} this case is shown in Figure~\ref{fi:parsimoniousSameRoot} (a). The intersection among cycles $C_{i,j}$ and $C_{i,j'}$ is the path in $T_{r_i}$ joining the lowest common ancestor $a$ of $y$ and $z'$ and the lowest common ancestor $b$ of $y'$ and $z$, where $(y,z)$ (resp., $(y',z')$) is the cross edge for $C_{i,j}$ (resp., $C_{i,j'}$). In this case, any relative ordering among $C_{i,j}$ and $C_{i,j'}$ produces a parsimonious sequence. 

\item{$C_{i,j}$ and $C_{i,j'}$ share one subtree:} this case is shown in Figure~\ref{fi:parsimoniousSameRoot} (b). Also in this case, cycles $C_{i,j}$ and $C_{i,j'}$ share exactly a path, namely the path between the root $r_i$ and the lowest common ancestor of $z$ and $y'$. The same argument as in the previous case applies. 
\end{itemize}

\end{description}

\begin{figure}[t]
\begin{center}
\includegraphics[width=11cm]{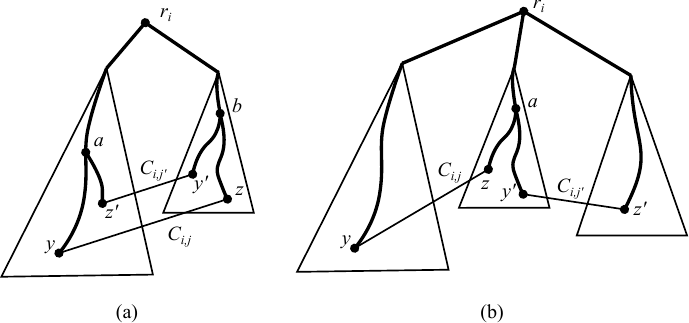}
\end{center}
\caption{Proof of Theorem~\ref{th:isparsimonious}. Case $i' = i$.}\protect\label{fi:parsimoniousSameRoot}
\end{figure}

\end{proof}

The following theorems bound the running time of Algorithm \ResilientSpanner\ and the number of edges in the computed $\sigma$-resilient $t$-spanner.

\begin{theorem}\label{th:shvalpreservingbis}
Algorithm \ResilientSpanner\ runs in $O(mn + n^2 \log n)$ time in the worst case.
\end{theorem}
\begin{proof}
We first bound the time required by Algorithm \ParsimoniousCycles\ in Figure~\ref{fi:algobackup}. 
The parsimonious shortest path tree $T_r$ in Line~\ref{li:tree} can be computed by a slight modification of Dijkstra's shortest path algorithm in $O(m + n \log n)$ worst case time, together with the auxiliary information about 
 $\delta(x)$, $k(x)$, $\apex(x)$ and $p(x)$ for each vertex $x\in T_r$. Using this auxiliary information, each edge $(y,z)$ can be processed in constant time in Lines 5--8. This implies that the total time spent through the loop in Lines 4--8 is bounded by $O(m)$.
Also the time spent in the initialization (Lines 2--3) and for returning all short cycles (Lines 9--12) is $O(m)$. As a result, each call to 
Algorithm \ParsimoniousCycles\ can be implemented in time $O(m+n\log n)$ in the worst case.

We now turn to Algorithm \ResilientSpanner\ of Figure~\ref{fi:mainalgo}. As it was previously mentioned, 
all the $\sigma$-fragile edges (Lines 2--3) can be computed in $O(m n + n^2 \log n)$ worst-case time by using Brandes' algorithm for computing shortcut values~\cite{Dagstuhl4}. 
Since each call to Algorithm \ParsimoniousCycles\  requires $O(m + n \log n)$ worst-case time, the overall running time of the algorithm is $O(mn + n^2 \log n)$ in the worst case.
\end{proof}


\begin{theorem}\label{th:shvalpreserving}
Let $G$ be a graph with $n$ vertices, with positive edge weights in $[w_\mathrm{min}, w_\mathrm{max}]$, and let $W = \frac{w_\mathrm{max}}{w_\mathrm{min}}$.
Algorithm \ResilientSpanner\ computes a $\sigma$-resilient $t$-spanner $R$ of $G$, $\sigma \geq t$, with $R\supseteq S$, containing $O\left( W \cdot n^{3/2}\right)$ edges.
\end{theorem}
\commento{
\begin{proof}
As explained above, a $\sigma$-resilient $t$-spanner $R$ can be computed by adding a set $\cal{C}$ of short cycles to $S$, one for each edge $e \in S$ with $\frag_G(e) > \sigma$. Let $C_e$ be the cycle in $\Gamma_e(G)$ added to the spanner.

We partition edges $e \in S$ with $\frag_S(e) > \sigma$ into three subsets $E_{\ell}$, $E_{m}$ and $E_{h}$, according to their fragility in $G$. For each subset we separately bound the number of edges in the union of cycles in $\cal{C}$ according to constants $c$ and $\gamma$, where $c$ is an odd integer greater than $\sigma$, and $\gamma = \round{1/\floor{(\sigma+1)/2}} - 2/c$.
\begin{description}
\item[low fragility edges:] $E_{\ell} = \left\{ e \in S \ \ | \ \  \sigma < \frag_G(e) \leq c \right\}$. By Theorem~\ref{th:fewhighshval}, we have $|E_{\ell}| = O(n^{1+1/\floor{(\sigma+1)/2}})$,\nota{qui non mostriamo che non pu\`o superare il numero di archi di $S$ \ldots Forse un $\min\braces{|S(n)|,n^{...}}$?} and since each cycle $C_e$, $e \in E_{\ell}$, contains at most $c$ edges, we have
$$\left|\bigcup_{e \in E_\ell} C_e \right| = O\left(n^{1+1/\floor{(\sigma+1)/2}}\right)$$

\item [medium fragility edges:] $E_m = \left\{ e \in S \ \ | \ \  c < \frag_G(e) < n^\gamma \right\}$. By Theorem~\ref{th:fewhighshval}, since the fragility of each edge in $E_{m}$ is greater than $c$, $|E_{m}| = O(n^{1+2/c})$. Since each cycle $C_e$, $e \in E_{m}$, contains at most $n^\gamma$ edges
$$\left|\bigcup_{e \in E_m} C_e \right| = O\left(n^{1+\gamma+2/c} \right) = O\left(n^{1+1/\floor{(\sigma+1)/2}}\right)$$

\item [high fragility edges:] $E_h = \left\{ e \in S \ \ | \ \  \frag_G(e) \geq n^\gamma \right\}$. By Theorem~\ref{th:fewhighshval}, $|E_{h}| = O\left(n^{1+\frac{2}{n^{\gamma}}}\right) = O(n)$, and by Theorem~\ref{th:smallcycles} we have that 
$$\left|\bigcup_{e \in E_h} C_e \right| = O\left(n \cdot \sqrt{|E_h|}\right) = O\left(n^{\frac{3}{2}}\right)$$

\end{description}

\noindent
Hence the total number of edges in $R$ is 

$$\left|\bigcup_{e \in E_\ell \cup E_m \cup E_h} C_e \right| = O\left(S(n) + n^\frac{3}{2}\right)$$
\end{proof}
}
\begin{proof}
By Theorem~\ref{th:correctweigthed}, the subgraph $R$ computed by Algorithm \ResilientSpanner\ is a $\sigma$-resilient $t$-spanner of $G$, since it is obtained by adding a parsimonious sequence $\cal{C}$ of short cycles to a $t$-spanner $S$, one for each $\sigma$-fragile edge $e$ in $S$.

Let $C_e$ be the cycle in $\Gamma_e(G)$ added to the spanner, and let $S(n)$ be the number of edges in $S$.
We partition $\sigma$-fragile edges $e \in S$ into three subsets, $E_{\ell}$, $E_{m}$ and $E_{h}$, according to their fragility in $G$. For each subset we separately bound the number of edges in the union of cycles in $\cal{C}$.
\begin{description}
\item[low fragility edges:] $E_{\ell} = \left\{ e \in S \ \ | \ \  \sigma < \frag_G(e) \leq 5 \right\}$. By Theorem~\ref{th:fewhighshval}, we have
$$|E_{\ell}| = O\left(\min\braces{S(n), n^{1+\frac{1}{\floor{\frac{\sigma+1}{2}}}}}\right) 
= O\left(n^{1+\frac{1}{\floor{\frac{\sigma+1}{2}}}}\right) \ ,$$
Thus, if $3 \leq \sigma < 5$ we have
$|E_{\ell}| = O\left(\min\braces{S(n), n^{3/2}}\right)$, while $E_\ell = \emptyset$ for $\sigma > 5$.
Let $e$ be any edge in $E_{\ell}$. Since $\frag_G(e) \leq 5$, cycle $C_e$ contains at most $5 W+1$ edges ($C_e$ contains exactly $\frag_G(e) \cdot W+1$ edges when $w(e) = w_\mathrm{max}$ and all other edges in $C_e$ have weight $w_\mathrm{min}$). So, we have
$$\left|\bigcup_{e \in E_\ell} C_e \right| = O\left(W \cdot n^{3/2}\right)\ \ 
\mathrm{for}\ \ 3 \leq \sigma < 5\ ,$$
while $\left|\bigcup_{e \in E_\ell} C_e \right| = 0$ for $\sigma\geq 5$;

\item [medium fragility edges:] $E_m = \left\{ e \in S \ \ | \ \  5 < \frag_G(e) < \log n \right\}$. By Theorem~\ref{th:fewhighshval}, since the fragility of each edge in $E_{m}$ is greater than $\max\braces{\sigma, 5}$, then
$$|E_{m}| = O\round{\min\braces{S(n),n^{1 + \frac{1}{\floor{(\sigma + 1)/2}}}, n^{\frac 4 3}}}
= O\round{\min\braces{n^{1 + \frac{1}{\floor{(\sigma + 1)/2}}}, n^{\frac 4 3}}}\ .$$
Each cycle $C_e$, with $e \in E_{m}$, contains at most $\log n \cdot W+1$ edges, so we have

$$\left|\bigcup_{e \in E_m} C_e \right| = O\round{W \cdot \log n \cdot \min\braces{n^{1 + \frac{1}{\floor{(\sigma + 1)/2}}}, n^{\frac 4 3}}}$$

\item [high fragility edges:] $E_h = \left\{ e \in S \ \ | \ \  \frag_G(e) \geq \log n\right\}$. By Theorem~\ref{th:fewhighshval}, $|E_{h}| = O\left(n^{1+\frac{2}{\log n}}\right) = O(n)$, and by Theorem~\ref{th:smallcycles} we have
$$\left|\bigcup_{e \in E_h} C_e \right| = O\left(n \cdot \sqrt{|E_h|}\right) = O\left(n^{\frac{3}{2}}\right)$$

\end{description}

\noindent
The total number of edges in $R$ depends on values of $\sigma$ and $W$:
\begin{itemize}
\item for $3 \leq \sigma < 5$\\
the number of edges is $O\round{W \cdot n^{\frac 3 2}}$;

\item for $5 \leq \sigma < \log n$ and $W = \Omega\round{\round{n^{\frac{1}{2} - \frac{1}{\floor{(\sigma + 1)/2}}}} / \log n}$\\
the number of edges is $O\round{W \cdot \log n \cdot n^{1 + \frac{1}{\floor{(\sigma + 1)/2}}}}$;

\item for $\sigma \geq \log n$, or $\sigma \geq 5$ and $W = O\round{\round{n^{\frac{1}{2} - \frac{1}{\floor{(\sigma + 1)/2}}}} / \log n}$\\
the number of edges is $O\round{n^{\frac 3 2}}$.

\end{itemize}
\end{proof}

\noindent
Note that, in the case of unweighted graphs, the number of edges in $R$ is always $O\round{n^{3 / 2}}$. Thanks to Corollary~\ref{coro:alfabeta}, Algorithm \ResilientSpanner\ can also be used to compute a $\sigma$-resilient $(\alpha,\beta)$-spanner $R$ of an unweighted graph, for any $\sigma \geq \alpha+\beta$, containing $O\left(n^{3/2}\right)$ edges in the worst case.

\commento{
Since\nota{questa frase si riferisce all'algo, spostare nella relativa subsection} the time required to compute all those underlying spanners is $O(mn)$, 
in all those cases the time required to build a $\sigma$-resilient spanner is $O(mn)$.
Theorem~\ref{th:shvalpreserving} can also be applied to build $\sigma$-resilient $f$-spanners, where $f$ is a general \emph{distortion} function
as defined in~\cite{PettieSchema}, provided that $\sigma \geq f(1)$. Furthermore, 
if we wish to compute a $\sigma$-resilient $t$-spanner with $\sigma < t$, the same algorithm can still be applied starting from a $\sigma$-spanner instead of an $t$-spanner, yielding the same bounds given in Theorem~\ref{th:shvalpreserving}.
}

\commento{
Our results can be extended to weighted graphs\nota{riportare la prova per i weighted},
since Theorems~\ref{th:fewhighshval} and \ref{th:smallcycles} also hold for graphs with positive edge weights.
Let $w_{max}$ and $w_{min}$ be respectively the maximum and minimum edge weight in the graph, and  let $W = \frac{w_{max}}{w_{min}}$.
For either $\sigma > \log n$ or $\sigma \geq 5$ and $W = O\round{\round{n^{\frac{1}{2} - \frac{1}{\floor{(\sigma + 1)/2}}}} / \log n}$,  we can compute
a $\sigma$-resilient $t$-spanner
with $O(n^{\frac{3}{2}})$ edges.
In the remaining cases (either  $\sigma \geq 5$ and larger $W$ or $\sigma = 3, 4$), the
total number of edges becomes $O(W \cdot n^{\frac{3}{2}})$ \nota{PGF: perch\'e questo non si accorda con la nuova prova?}. 
}

\section{Conclusions and further work}\label{se_conclusions}

In this paper, we have investigated a new notion of resilience in graph spanners by introducing the concept of $\sigma$-resilient spanners. In particular, we have shown that it is possible to compute small stretch $\sigma$-resilient spanners of optimal size for graphs with small positive edge weights.

The techniques introduced for small stretch $\sigma$-resilient $t$-spanners can be used to turn any generic spanner (e.g., a fault-tolerant spanner, or, in the unweighted case, an $(\alpha,\beta)$-spanner, for $\sigma\geq \alpha+\beta > 3$) into a $\sigma$-resilient spanner, by adding a suitably chosen set of at most $O(W \cdot n^{3/2})$ edges (that is, $O(n^{3/2})$ in the unweighted case). 

We expect that in practice  
our $\sigma$-resilient $t$-spanners, for $\sigma \geq t >3$, will be substantially sparser than what it is implied by the bounds given in Theorem~\ref{th:shvalpreserving}, and thus of higher value in applicative scenarios. Towards this aim, we plan to perform a thorough experimental study.
Another intriguing question is whether our theoretical analysis on the number of edges that need to be added to a $t$-spanner in order to make it $\sigma$-resilient provides tight bounds, or whether it can be further improved for $\sigma > 3$. 

\bibliographystyle{plain}
\bibliography{vital}

\end{document}